\documentclass[12pt]{article}

\usepackage[english]{babel}
\usepackage{bm}
\usepackage{amsmath}
\usepackage[table]{xcolor}
\usepackage{amsthm}
\usepackage{amsfonts}
\usepackage{amssymb}
\usepackage{amscd}
\usepackage{graphicx}
\usepackage[authoryear]{natbib}
\usepackage{float}
\usepackage{wrapfig}
\usepackage{enumitem}
\usepackage{subcaption}

\newcommand{\blind}{1}


\usepackage[margin=1.0in]{geometry}

\def\spacingset#1{\renewcommand{\baselinestretch}%
{#1}\small\normalsize} \spacingset{1}

\theoremstyle{plain}

\newtheorem{theorem}{Theorem}[section]
\newtheorem{lemma}[theorem]{Lemma}
\newtheorem{corollary}[theorem]{Corollary}
\theoremstyle{remark}

\newtheorem{assumption}[theorem]{Assumption}


\usepackage{amsmath, amsthm, comment}
\usepackage{chngcntr}
\usepackage{apptools}
\usepackage{multirow}

\usepackage[colorinlistoftodos]{todonotes}

\allowdisplaybreaks
\RequirePackage[colorlinks,citecolor=blue,urlcolor=blue, linkcolor = blue ]{hyperref}

\newcommand{\round}[1]{\ensuremath{\lfloor#1\rceil}}


\sloppy
\begin{document}

\def\spacingset#1{\renewcommand{\baselinestretch}%
{#1}\small\normalsize} \spacingset{1}


\if1\blind
{
  \title{\bf Policy effect evaluation under counterfactual neighborhood interventions in the presence of spillover}
 \author{Youjin Lee\thanks{Department of Biostatistics, Brown University. Email: youjin\_lee@brown.edu},~Gary Hettinger\thanks{Department of Biostatistics, Epidemiology and Informatics, University of Pennsylvania. Email: ghetting@pennmedicine.upenn.edu }, and Nandita Mitra\thanks{Department of Biostatistics, Epidemiology and Informatics, University of Pennsylvania. Email: nanditam@pennmedicine.upenn.edu} }
    \date{}
  \maketitle
} \fi

\if0\blind
{
  \bigskip
  \bigskip
  \bigskip
  \begin{center}
    {\LARGE\bf Policy effect evaluation under counterfactual neighborhood interventions in the presence of spillover}
\end{center}
  \medskip
} \fi

\bigskip
\begin{abstract}
Policy interventions can spill over to  units of a population that are not directly exposed to the policy but are geographically close to the units receiving the intervention. In recent work, investigations of spillover effects on neighboring regions have focused on estimating the average treatment effect of a particular policy in an observed setting. Our research question broadens this scope by asking what policy consequences would the treated units have experienced under hypothetical exposure settings. When we only observe treated unit(s) surrounded by controls -- as is common when a policy intervention is implemented in a single city or state -- this effect inquires about the policy effects under a counterfactual neighborhood policy status that we do not, in actuality, observe. In this work, we extend difference-in-differences (DiD) approaches to spillover settings and develop identification conditions required to evaluate policy effects in counterfactual treatment scenarios. These causal quantities are policy-relevant for designing effective policies for populations subject to various neighborhood statuses. We develop doubly robust estimators and use extensive numerical experiments to examine their performance under heterogeneous spillover effects. We apply our proposed method to investigate the effect of the Philadelphia beverage tax on unit sales.  
\end{abstract}

\noindent%
{\it Keywords:}  Beverage tax, Cross-border shopping, Difference-in-differences, Offsetting effect 
\vfill

\newpage
\spacingset{1.5} 

\section{Introduction}
\label{sec:introduction}

Policy interventions often influence portions of the population who are not directly targeted by the policy. For example, when recreational marijuana is legalized in one state, residents in neighboring states may be willing to cross state borders to buy products that are illegal in their own states, which increases illegal marijuana consumption in neighboring states~\citep{wu2020spillover}. As another example, California’s shelter-in-place orders could also affect health and behavioral outcomes in its neighboring states~\citep{berry2021evaluating}.  These examples illustrate how the effect of a policy intervention can ``spill over'' to populations not directly targeted by the policy but are nonetheless close to the units most directly affected (or treated). Such effects are referred to as \textit{spillover effects}~\citep{tchetgen2012causal, bowers2013reasoning, aronow2017estimating}.
Despite the possibility of spillover effects, the populations located near the directly-affected units have frequently been used as controls because their demographic characteristics are likely to be comparable to those of the treated ones (e.g.,~\cite{campbell2017sharing} and~\cite{thorpe2020evaluation}). 
However, when the policy effects do spill over, these neighbors may fail to provide a comparable control that accurately reflects the situation \textit{in the absence of} the treatment effect; thus, treating them as controls may result in bias when evaluating the effect of a policy intervention. In recent years, focus has turned to examining the effect of a policy on such neighboring controls (rather than on the directly-affected or -treated population) to better understanding the overall impact of a policy intervention~\citep{verbitsky2012causal, berg2020handling, butts2021difference, hu2022average, hettinger2023estimation}.

The Philadelphia beverage tax study motivated our current work~\citep{roberto2019association, bleich2020association, lawman2020one, gibson2021no}. 
\cite{roberto2019association} studied changes in beverage prices (pass-through to the consumer), volume sales, and unit sales following roll out of the sugar-sweetened and artificially-sweetened beverage tax in Philadelphia in January 2017. The study showed that the implementation of an excise tax on sugar-sweetened and artificially-sweetened beverages in the city of Philadelphia was associated with a substantial decrease in unit sales for taxed beverages in Philadelphia but also with an increase in unit sales in bordering counties of Philadelphia that were not subject to the excise tax. The latter association, termed \textit{bypass effect} by~\cite{hettinger2023estimation}, could potentially be explained by cross-border shopping behaviors among Philadelphia city residents who travel to bordering counties in order to bypass the excise tax, partially basing their decision on factors such as the expected price of the product or transportation costs. 
Figure~\ref{fig:real} illustrates the total unit sales of taxed individual- and family-sized beverages of supermarkets in Philadelphia, Baltimore (the control city), and neighboring counties, respectively, before and after the tax implementation, measured at each 4-week period starting from January 2016. Figure~\ref{fig:real} shows that the unit sales of taxed beverages declined in Philadelphia compared with those in Baltimore; whereas the unit sales increased in neighboring counties of Philadelphia, especially for taxed family-sized  beverages.
\begin{figure}[ht]
\centering
\begin{subfigure}[b]{0.48\textwidth}
\includegraphics[width=\textwidth]{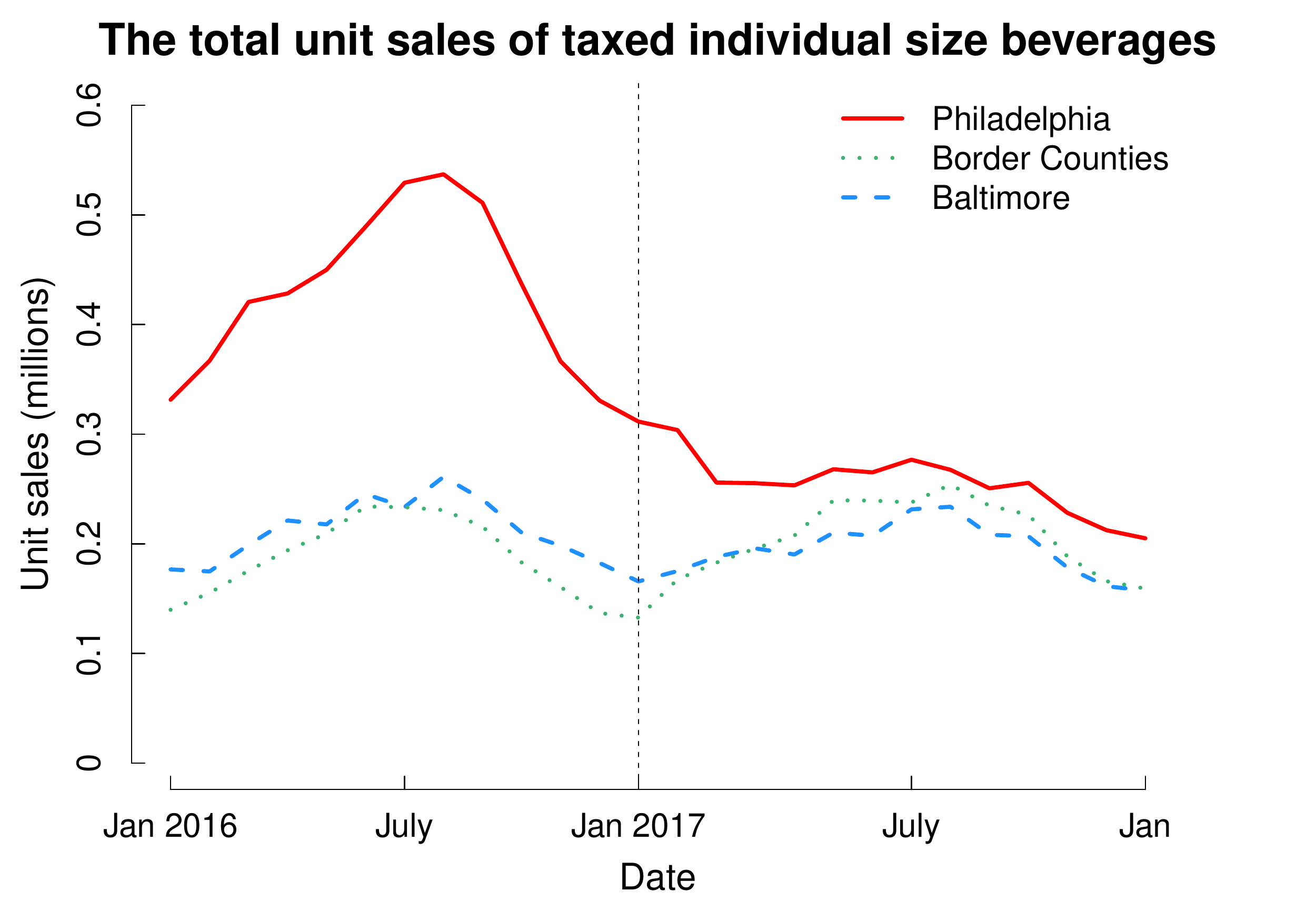}
\end{subfigure}
\begin{subfigure}[b]{0.48\textwidth}
\includegraphics[width=\textwidth]{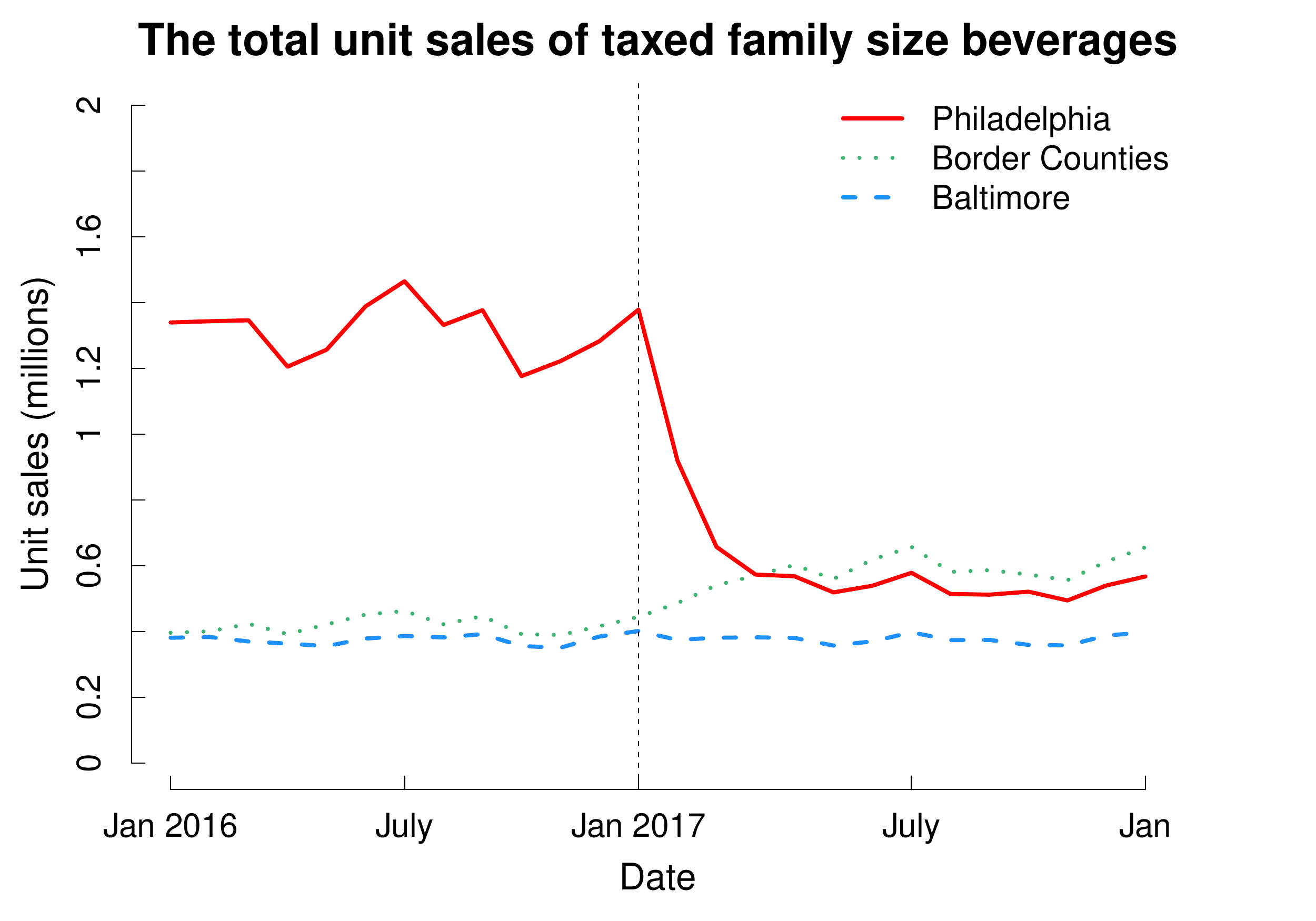}
\end{subfigure}
\caption{\label{fig:real} Changes in the total unit sales of (1) taxed individual sized beverages and (2) taxed family sized beverages in Philadelphia (solid lines), Baltimore (dashed lines), and border counties (dotted lines) supermarkets before and after the excise tax implementation}
\end{figure}

Our follow-up question, then, is as follows: what policy consequences would Philadelphia experience if its surrounding neighbors were also directly exposed to the policy? Answering this question is tantamount to projecting policy effects under a \textit{counterfactual} neighborhood policy status that we do not, in actuality, observe. Because we do not observe the policy status of any bordering counties that were directly affected by the intervention (as there are none), additional identification assumptions would be required in order to identify the potential effect that the policy would have on Philadelphia if it were surrounded by directly-affected neighbors. 
These types of effect evaluations are policy-relevant for two reasons: first, in determining the impact of a policy when a larger population (e.g., not just one city) is affected; and second, in developing effective policies for populations subject to various neighborhood statuses in the future. For example, based on the findings from the Philadelphia beverage tax, researchers or policymakers may want to decide whether or not to expand the beverage tax to other cities or states where neighboring regions have already been affected by an excise tax. Ignoring potential discrepancies in cross-border behaviors between Philadelphia and those cities or states due to different neighborhood tax statuses may result in ineffective policy implementation.

Estimating spillover effects of a policy intervention has only recently been investigated despite the fact that researchers in social sciences and public health have long recognized that a policy change in one region can influence its neighbors~\citep{verbitsky2012causal, delgado2015difference, clarke2017estimating, berg2020handling, butts2021difference}.
The difference-in-differences (DiD) approach is one of the most popular methods to evaluate the causal effects of policy interventions~\citep{abadie2005semiparametric, athey2006identification, donald2007inference}, and the approach has been adapted to measure spillover effects recently.
\cite{clarke2017estimating} discussed a spillover-robust DiD method with a focus on determining an ``optimal'' distance of neighbors indirectly affected by the policy. 
\cite{hettinger2023estimation} developed a doubly-robust DiD estimator for spillover effects on neighboring controls who are adjacent to the treated units. 
\cite{butts2021difference} introduced several causal estimands in the contexts of spatial spillover where the treatment effects spill over from one region to another. However, the causal estimands proposed in \cite{butts2021difference} are based on the \textit{observed} intervention status of neighboring units, not on the \textit{counterfactual} intervention status.

In this work, we propose a new causal estimator that can identify the policy effect on treated units under a counterfactual neighborhood treatment status. We assume that indirectly-affected controls and unaffected controls are fixed and known. We adapt DiD approaches to spillover settings and introduce identification conditions needed to evaluate policy effects under counterfactual treatment scenarios. Using our methodology, researchers will be able to tailor their analyses based on their own specified assumptions about spillover behaviors (e.g., cross-border shopping behaviors) and obtain estimates of policy effects based on those assumptions. These causal quantities can then be utilized to design  effective public policies under diverse neighborhood treatment contexts.

In Section~\ref{sec:setting}, we introduce notation and assumptions followed by our policy-relevant target estimands and their identification conditions. 
In Section~\ref{sec:method}, we propose our new DiD estimator and discuss its properties. In Section~\ref{sec:simulation}, we examine the performance of our proposed doubly-robust estimator through simulation studies. We apply our proposed method to the Philadelphia beverage tax data in Section~\ref{sec:data}. Finally, in Section~\ref{sec:discussion}, we discuss  potential limitations of our proposed method, and suggest future directions for policy effect evaluation in the presence of spillover.

\section{Setting and Target Estimands}
\label{sec:setting}

Throughout this work, we will consider the unit of analysis to be a city, state, or store (i.e., aggregate data), instead of human subjects (e.g., consumers). This is because human subjects are easily movable from the policy exposure (treatment group) to the control group, and vice versa, possibly preventing the treatment level from being well-defined.

Let $Y_{it}$ denote the outcome variable (e.g., unit sales of the taxed product) at time $t = 0,1,2,\ldots, T$ of unit $i=1,2,\ldots,n$. For simplicity, consider two time points, with $t=0$ indicating pre-treatment and $t=1$ indicating post-treatment period. 
Let $Z_{it}$ denote a time-varing binary treatment indicator so that $Z_{it}=1$ if a unit $i$ is treated at time $t$. Let $\bm{Z}_{t} = (Z_{1t}, Z_{2t}, \ldots, Z_{nt})$ be a treatment status vector of $n$ units at time $t$. 
We also have a group indicator where $A_{i} = 1$ indicates being in the treatment group and $\bm{A} = (A_{1}, A_{2}, \ldots, A_{n})$ is a group indicator vector of $n$ units. Then the treatment status is $Z_{it} = A_{i} \times I(t > 0)$. We also observe a $q$-dimensional vector of pre-treatment covariates $\bm{X}_{i} \in \mathbb{R}^{q}$. We will allow time-varying effects of $\bm{X}_{i}$ on the outcome $Y_{it}$ (e.g., time-varying effects of baseline prices  on later sales). One of the baseline covariates that may be relevant is the shortest distance from unit $i$ to the treatment group (e.g., distance to the border), say $X_{ij}$. Then $X_{ij} = 0$ for unit $i$ if $A_{i} = 1$.
Let $\Omega_{C}$ be a sample space of a set $C$.
In summary, we have independent and identically distributed (i.i.d.) panel data $\{ Y_{i0}, Y_{i1}, A_{i}, \bm{X}_{i}  \}_{i=1}^{n}$.

We introduce a potential outcomes framework~\citep{rubin1974estimating, holland1986statistics} to define our target estimand and identification conditions. Under spillover settings, the Stable Unit Treatment Value Assumption (SUTVA) is not satisfied as other units' policy intervention status would have an impact on one's (potential) outcomes.  
Here we consider the standard assumption that the potential outcome at $t$ is only affected by the current (at $t$) treatment assignment. 
Therefore, we can represent each unit's potential outcomes through the entire vector of treatment assignments, $\bm{Z}_{t}$, not only through unit $i$'s own treatment status, $Z_{it}$.
As a result, each unit $i$ has a potential outcome $Y^{(\bm{z}_{t})}_{it}$ at time $t$ if the entire population's treatment status (e.g., tax status) is given by $\bm{z}_{t} \in \{0, 1\}^{n}$, which results in $2^{n}$ potential outcomes for each unit. 
To simplify, we assume that in addition to $Z_{it}$, the entire treatment status $\bm{Z}_{t}$ can only affect the outcome $Y_{it}$ through a \textit{known} function  $h_{it}: \{ 0, 1\}^{n} \rightarrow \mathbb{R}$. This function maps the entire treatment status $\bm{Z}_{t}$ to a scalar-value that denotes the exposure level received by unit $i$ that is not directly through $Z_{it}$. This function is referred to as the \textit{exposure mapping}~\citep{aronow2017estimating}. Without spillover effects, $h_{it}(\bm{z}) = 0$ for all $\bm{z} \in \Omega_{\bm{Z}}$ as there is no exposure received by unit $i$ not directly through its own exposure. 
More examples of $h_{it}$ functions considered in spatial spillover settings can be found in~\cite{butts2021difference}.
Determining or estimating the exposure mapping is dependent on the specific context of the policy evaluation. In our specific context, we assume that $h_{it}(\bm{Z}_{t}) = 0$ for $t=0$, and when $t=1$ (i.e., post-treatment period):

\begin{eqnarray}
\label{eq:exposure}
h_{it}(\bm{Z}_{t}) &=& \left\{ \begin{array}{lll} 1 & \quad \text{Adjacent to the treated unit(s) and } ~Z_{it} = 0  \\ 0 & \quad \text{Not adjacent to the treated unit(s) and }~Z_{it} = 0  \\ 1 & \quad \text{Surrounded by the treated unit(s) and } ~Z_{it} = 1 \\ 0 & \quad \text{Otherwise and} ~Z_{it} = 1     \end{array} \right.
\end{eqnarray}

According to this specification, the neighborhood treatment status is active (i.e., $h_{it}(\bm{Z}_{t}) = 1$) for controls if and only if they are adjacent to at least one treated unit. 
On the other hand, the neighborhood treatment status is active for the treated if and only if they are completely surrounded by treated neighbors. When the treated unit is adjacent to at least one control neighbor, we assume that neighborhood treatment status is inactive (i.e., $h_{it}(\bm{Z}_{t}) = 0$). This definition is largely motivated by the Philadelphia beverage tax study, where having at least one neighbor that is not affected by the tax can be an opportunity for cross-border shopping for Philadelphia residents. Therefore, the treated units that are not completely surrounded by the treated perceive the neighborhood status as inactive, whereas the controls that are adjacent to at least one treated unit perceive the neighborhood status as active. Let $g_{it}(\bm{Z}_{t}) = (Z_{it}, h_{it}(\bm{Z}_{t})) \in \mathbb{R}^{2}$.
Similarly, define time-invariant functions $h_{i}(\bm{A})$ and $g_{i}(\bm{A})$: $h_{i}(\bm{A}) = 1$ if $A_{i}=0$ and unit $i$ is adjacent to treated units or if $A_{i} = 1$ and unit $i$ is surrounded by treated units.  
Note that $A_{i}= 1$ implies $g_{i}(\bm{A}) = (1,0)$ in our case as we do not observe treated units surrounded by other treated units. We extend binary neighborhood treatment assignments to a continuous treatment setting (for example, allowing $p\%$ of the neighbors surrounding unit $i$ to be treated) in our Supplementary Materials (Section~\ref{sec:continuous}). 

One causal estimand we can consider is the average treatment effect on the treated (ATT):
\begin{eqnarray}
    \label{eq:ATT}
    \text{ATT} &:=& E( Y^{(1,0)}_{1} - Y^{(0,0)}_{1}  \mid A = 1 ).
\end{eqnarray}
All expectations throughout this work are taken across units $i=1,2,\ldots,n$, and for simplicity, we omit the unit subscript $i$ within the expectation.
One may be also interested in the spillover effects of a policy intervention on neighboring controls, which we termed the average treatment effect on a neighboring Control (ATN) in previous work ~\citep{hettinger2023estimation}:
\begin{eqnarray}
\label{eq:ATN}
    \text{ATN} = E( Y^{(0,1)}_{1} - Y^{(0,0)}_{1}  \mid g(\bm{A}) = (0,1) ).
\end{eqnarray}
Another version was referred to as the average treatment effect on the close to treated (ATC) in~\cite{clarke2017estimating}. \cite{butts2021difference} generalized this spillover effect as a function of treatment status $Z_{it}$, i.e., $\text{ATN}(z) = E(Y^{(z,1)}_{1} - Y^{(0,0)}_{1}  \mid g(A) = (z,1) )$ given that units with $g(A) = (z,1)$ are available. 

These existing works have attempted to measure the ATT and ATN by adapting DiD approaches. \cite{clarke2017estimating} estimated these effects using the popular two-way fixed effects model after determining the degree of spillover (i.e., how far intervention effects would spill over from the treatment group to its neighboring controls).
\cite{butts2021difference} also used two-way fixed effects models, which allowed for interactions between $A_{i}$ and $h_{i}(\bm{A})$ and incorporated correlation due to repeated measurements and spatial dependence. 
\cite{hettinger2023estimation} improved on these outcome model-dependent approaches by including the treatment assignment model in their estimator, leading to doubly-robust properties.

\subsection{Target estimand}
\label{ssec:AOTT}

In this work, instead of the ATT and ATN, we focus on a new causal estimand that measures the treatment effect when the treated unit perceives that its neighboring units are all treated while keeping our target population as the treated unit(s) with $A_{i} = 1$ (e.g., Philadelphia).
We call this effect the average \textit{offset} treatment effect on the treated (AOTT), denoted by $\tau$.
\begin{eqnarray}
\label{eq:AOTT}
\tau &:=& E ( Y^{(1,1)}_{1} - Y^{(0,0)}_{1} \mid A = 1 ) \nonumber \\ 
&=& \underbrace{E ( Y^{(1,0)}_{1} - Y^{(0,0)}_{1} \mid A = 1 )}_{\text{ATT}} - \underbrace{E ( Y^{(1,0)}_{1} - Y^{(1,1)}_{1} \mid A = 1 )}_{\text{Offsetting effect}}.
\end{eqnarray}
This effect can be decomposed by taking the difference of the ATT and $E(Y^{(1,0)}_{1} - Y^{(1,1)}_{1} \mid A = 1)$. We call the latter expectation the \textit{offsetting effect} of having neighboring controls compared to having neighbors all treated. The opposite sign of this effect, i.e., $E(Y^{(1,1)}_{1} - Y^{(1,0)}_{1} \mid A = 1 )$, was referred to as ``spillover effects on the treated'' in~\cite{butts2021difference}. 
The offsetting effect would have been ``returned'' to the treatment group if the treatment group had been surrounded by treated units rather than being adjacent to controls. Therefore, by subtracting the offsetting effect, the AOTT prevents us from overestimating the treatment effect on the treated.
This estimand is conceptually similar to the ``unrealized spillover causal effect on the treated'' that was defined in~\cite{grossi2020synthetic}. The unrealized spillover causal effect measures the spillover effect on a single treated unit under the hypothetical scenario where a particular unit in the same cluster was exposed to the intervention under a partial interference assumption~\citep{sobel2006randomized}. 

The following three assumptions state that the entire treatment assignment vector affects the potential outcome only through $Z_{it}$ and $h_{it}(\bm{Z}_{t})$ (i.e., $g_{it}(\bm{Z}_{t})$), and this potential outcome is well-defined and observed when the treatment assignment coincides with $g_{it}(\bm{Z}_{t})$.

\begin{assumption}[Neighborhood exposure mapping] \label{as:mapping}
The data $\{Y_{i1}, Y_{i0}, g_{i}(\bm{A}), \bm{X}_{i}\}_{i=1}^{n}$ are i.i.d. with $g_{i}(\bm{A}) = (A_{i}, h_{i}(\bm{A}))$ and
\[
Y^{(\bm{z}_{t})}_{it} = Y^{(z_{it}, h_{it}(\bm{z}_{t}))}_{it} \text{ for all } \bm{z}_{t} \in \Omega_{\bm{Z}_{t}}.
\]
\end{assumption}

\begin{assumption}[Consistency] \label{as:consistency}
\[
Y_{it} = Y^{(z_{it}, h_{it})}_{it} I(Z_{it} = z_{it}, h_{it}(\bm{Z}_{t}) = h_{it}).
\]
\end{assumption}

In addition, we require a positivity assumption on the conditional probability of each exposure given observed covariates.

\begin{assumption}[Positivity] \label{as:pos} For some $\epsilon > 0$,
\[
Pr(A_{i}  = 1) > \epsilon \text{ and } Pr( g_{i}(\bm{A}) = (a_{1}, a_{2})  \mid \bm{X}_{i} ) > \epsilon \text{ for } (a_{1}, a_{2}) = (0, 1), (0,0). 
\]
\end{assumption}

The counterfactual estimand in~\eqref{eq:AOTT} may seem analogous to the ``total effect'' considered in~\cite{butts2021difference}, which is based on the treatment group comprised both of units $i$ with $g_{i}(\bm{A}) = (0,1)$ and $(1,1)$. However, because we do not observe the outcome under $g_{i}(\bm{A}) = (1,1)$ in many situations, such as the Philadelphia beverage tax study, these two estimands are conceptually different. The following additional assumptions are required to identify $E(Y^{(1,1)}_{1} \mid A = 1)$ in~\eqref{eq:AOTT}.

\begin{assumption}[Conditional counterfactual parallel trends with the treated] 
\label{as:pt}
    \[
    E( Y^{(0,0)}_{1} - Y^{(0,0)}_{0} \mid A =1, \bm{X} )  = E(  Y^{(0,0)}_{1} - Y^{(0,0)}_{0} \mid g (\bm{A}) = (0,0), \bm{X} ).
    \]
\end{assumption}
Assumption~\ref{as:pt} allows us to identify the ATT in Equation~\eqref{eq:AOTT}, which allows us to incorporate  time-varying effects of baseline covariates on the outcome.

\begin{assumption}[Counterfactual offsetting effect]    
\label{as:link}
\[
   E(Y^{(1,0)}_{1} - Y^{(1,1)}_{1} \mid A = 1, \bm{X}) +   E(Y^{(0,1)}_{1} - Y^{(0,0)}_{1} \mid g(\bm{A}) = (0,1), \bm{X}) = 0.
\]
\end{assumption}

Assumption~\ref{as:link} allows us to identify the conditional offsetting effects through conditional spillover effects. 
If there are no spillover effects, we have no offsetting effects that would offset the ATT, i.e., $E(Y^{(1,0)}_{1} - Y^{(1,1)}_{1} \mid A = 1, \bm{X} = \bm{x}) = 0$ for all $\bm{x} \in \Omega_{\bm{X}}$. 
This is reasonable as there would be no impact of the neighborhood's policy intervention status in the absence of cross-border shopping behaviors.
On the other hand, if there is no effect heterogeneity in spillover effects, i.e., when $E(Y^{(0,1)}_{1} - Y^{(0,0)}_{1} \mid g(\bm{A}) = (0, 1), \bm{X} = \bm{x}) = E(Y^{(0,1)}_{1} - Y^{(0,0)}_{1} \mid g(\bm{A}) = (0, 1))$ for all $\bm{x} \in \Omega_{\bm{X}}$, then the sum of the offsetting effect and the ATN would be zero. That is, if the treatment group had been surrounded by other directly exposed units, all the spillover effects seen in the nearby control group would have been returned to the treatment group. For example, in a scenario where cross-border shopping behaviors do not depend on baseline characteristics (such as cost of transportation and/or the baseline price of the taxed products), the \textit{entire} change in sales owing to those behaviors would not have happened if the neighboring counties were also directly affected by the policy, i.e., $E(Y^{(1,1)}_{1} - Y^{(0,0)}_{1} \mid A = 1)$ = ATT + ATN. 

In the presence of spillover effect heterogeneity, we assume that the spillover effects observed in neighboring control units with baseline characteristics $\bm{X}$ would be the same as the effect of treating neighboring controls. In other words, there would be a (negative) offsetting effect on treated units with similar baseline characteristics.
Therefore, the sum of the marginalized spillover effects (the ATN) and the marginalized offsetting effect is not necessarily zero.
To identify the ``unrealized spillover effects on the treated'',~\cite{grossi2020synthetic} imputed the potential outcome of the treated unit using a linear combination of the estimated regression coefficients from the model that relates the counterfactual outcome with the pre-treatment outcomes and covariates, distance between units, and duration of treatment. Here, we essentially impute the average potential outcomes of the treated unit under the counterfactual scenario using the spillover effects of neighbors with comparable baseline characteristics $\bm{X}$. Therefore, Assumption~\ref{as:link} can be justified when the baseline covariates $\bm{X}$ are carefully chosen under the assumption that similarity in the observed covariates $\bm{X}$ between the treated and neighboring control groups makes the (negative) offsetting and spillover effects comparable. We provide a detailed discussion on Assumption~\ref{as:link} along with toy examples in Section~\ref{ssec:assumption2.5}.


Next we consider another conditional counterfactual parallel trends assumption, now between the neighboring controls and non-neighboring controls. 
\begin{assumption}[Conditional counterfactual parallel trends with the neighboring controls] 
\label{as:pt2}

Suppose that the following counterfactual parallel trend assumption holds for all $\bm{X} \in \Omega_{\bm{X}}$.
\begin{eqnarray*}
    E( Y^{(0,0)}_{1} - Y^{(0,0)}_{0} \mid  g(\bm{A}) = (0,1),\bm{X} )  = E(  Y^{(0,0)}_{1} - Y^{(0,0)}_{0} \mid g(\bm{A}) = (0,0), \bm{X}).
\end{eqnarray*}
\end{assumption}
Then under the above conditions, we have the following equations for the AOTT conditional on $\bm{X}$:
\begin{eqnarray*} 
&&  E (Y^{(1,1)}_{1} - Y^{(0,0)}_{1} \mid A = 1, \bm{X} ) \\
	&=& E( Y^{(1,0)}_{1} - Y^{(0,0)}_{0} \mid A =1, \bm{X} ) + E (Y^{(1,1)}_{1} - Y^{(1,0)}_{0} \mid A = 1, \bm{X} )\\
	&=& \underbrace{E( Y^{(1,0)}_{1} - Y^{(0,0)}_{0} \mid A = 1, \bm{X} )- E(  Y^{(0,0)}_{1} - Y^{(0,0)}_{0} \mid g(\bm{A}) = (0,0), \bm{X} )}_{\text{ \textbf{ATT}:  by (A4) }} \\ 
	&+ & \underbrace{E( Y^{(0,1)}_{1} - Y^{(0,0)}_{0} \mid g(\bm{A})=(0,1), \bm{X} )- E(  Y^{(0,0)}_{1} - Y^{(0,0)}_{0} \mid g(\bm{A}) = (0,0), \bm{X} )}_{ \textbf{-Offsetting effect}:   \text{ by (A5) and (A6)}}
\end{eqnarray*}

\subsection{Counterfactual offsetting effect}
\label{ssec:assumption2.5}

In this section, we describe Assumption~\ref{as:link} in further detail and illustrate the difference between spillover effects and (negative) offsetting effects using a toy example. 
There are two factors that differentiate these two effects: (1) different distributions of $\bm{X}$ among treatment and neighboring control groups and (2) heterogeneity in spillover effects across $\bm{X}$.

\begin{table}[H]
	\centering
	\resizebox{1.0\textwidth}{!}{\begin{tabular}{c|c|c|c}
			$\bm{x}$ & $Pr(\bm{X} = \bm{x} \mid A = 1)$ & $Pr(\bm{X} = \bm{x} \mid g(\bm{A}) = (0,1))$ & $E(Y^{(0,1)}_{1} - Y^{(0,0)}_{1} | g (\bm{A}) = (0,1), \bm{X} = \bm{x})$ \\ 
			\hline
			0.1 & 0.00 & 0.25 & 0.55 \\
			0.2 & 0.05 & 0.20 & 0.50 \\ 
			0.3 &  0.20 & 0.15 & 0.45\\
			0.4 &  0.30 & 0.10 & 0.30\\ 
			0.5 & 0.25 & 0.10 & 0.20\\ 
			0.6 & 0.20 & 0.10 & 0.10\\
			0.7 & 0.00 & 0.10 & 0.07 \\
			\hline
		 Total & 1.00 & 1.00 &   \\
		 \hline
	\end{tabular}}
	\caption{\label{tab:example} A toy example showing heterogeneity in baseline covariate distributions and spillover effects}
\end{table}

Table~\ref{tab:example} illustrates a hypothetical case with a univariate $X$ (e.g., baseline prices of taxed products), its conditional distributions in the treatment group ($A = 1$) and neighboring control group ($g(\bm{A}) = (0,1)$), and the conditional spillover effects in the neighboring control. 
As the distribution of $\bm{X}$ varies across these two groups, the second and third columns are different. For example, no unit in the treatment group has an $X$ of less than 0.2, whereas a quarter of the units in the neighboring control have an $X$ of less than 0.2. 
Assumption~\ref{as:link} indicates that the opposite sign of the offsetting effect (i.e., $E\{E(Y^{(1,1)}_{1} - Y^{(1,0)}_{1} \mid A = 1, \bm{X})\}$) marginalizes the conditional spillover effects (the fourth column) over the treatment group (i.e., $E\{ E(Y^{(0,1)}_{1} - Y^{(0,0)}_{1} \mid g(\bm{A}) = (0, 1), \bm{X} ) \mid A = 1 \}$), whereas the ATN (Equation~\eqref{eq:ATN}) marginalizes the conditional spillover effects over the neighboring control group. 
In this example, the marginalized offsetting effect, -$E \{ E(Y^{(0,1)}_{1} - Y^{(0,0)}_{1} | g(\bm{A}) = (0,1), \bm{X}) \mid A = 1 \}$, is given by -0.275. Observe that only the conditional distribution of $\bm{X}$ on the treatment group, and not on the neighboring control group, is relevant here.
On the other hand, the ATN (i.e., the marginalized spillover effect on neighboring controls) is given by $E \{ E(Y^{(0,1)}_{1} - Y^{(0,0)}_{1} | g(\bm{A}) = (0,1), \bm{X}) \mid g(\bm{A}) = (0,1) \} = 0.372$. Therefore, the summation of the ATN and offsetting effect is not necessarily zero even under Assumption~\ref{as:link}. In the absence of spillover effect heterogeneity (i.e., if the fourth column in Table~\ref{tab:example} were the same for all $x$ values), the sum of these two effects should be zero, i.e., all the spillover effects observed in neighboring controls would have been returned to Philadelphia under the counterfactual scenario. 

Depending on an investigator's prior knowledge or beliefs, baseline covariates $\bm{X}$ can be properly redefined so as to make Assumption~\ref{as:link} more plausible. One can take into account multiple social and economic factors related to spillover behaviors, such as transportation costs~\citep{knight2013state} and distance to the border~\citep{asplund2007demand}, and then combine these into one variable. In this way, one can define $\bm{X}$ such that similarities in $\bm{X}$ imply comparability between offsetting and spillover effects so as to satisfy Assumption~\ref{as:link}.  
For instance, suppose that the baseline covariates include transportation costs for cross-border shopping behaviors and the baseline price of the product. Then these multivariate covariates could be combined into a univariate covariate, $X$, that indicates the ``total cost'' of buying the product.
For the treatment group, $X$ would then denote the baseline price of the taxed products while for the neighboring controls group it would denote the baseline price of the taxed products plus transportation costs.
Under this covariate definition of $X$, we are assuming that the extra purchases observed in the neighboring control with the total cost of $X$ (price of the taxed products plus transportation costs) would have returned to the treatment group with the baseline price of $X$ (with a zero transportation cost) when all of its neighbors had received the policy intervention, i.e., those conditional spillover effects are comparable to the conditional offsetting effect with the same $\bm{X}$.

\begin{table}[H]
	\centering
	\resizebox{1.0\textwidth}{!}{\begin{tabular}{c|c|c|c||c|c|c|c}
	
			\multicolumn{4}{c||}{$A = 1$} & \multicolumn{4}{c}{$g(\bm{A}) = (0,1)$} \\
			\hline
		Distance  & Price & $X = x$ & $Pr(X = x \mid A = 1)$ & Distance & Price & $X = x$ & $Pr(X = x \mid g(\bm{A}) = (0,1) )$ \\
			\hline
		0.00 & 0.1 & 0.1 & 0.00 & 0.30 & 0.1 & 0.1 & 0.25\\
		0.00 & 0.2 & 0.2 & 0.05 & 0.45 & 0.2 & 0.3 & 0.20 \\ 
		0.00 & 0.3 & 0.3 & 0.20 & 0.50 & 0.3 & 0.4 & 0.15\\
		0.00 & 0.4 & 0.4 & 0.30 & 0.35 & 0.4 & 0.5 & 0.10\\
		0.00 & 0.5 & 0.5 & 0.25 & 0.40 & 0.5 & 0.6 & 0.20 \\ 
		0.00 & 0.6 & 0.6 & 0.20 & 0.20 & 0.6 &  & \\
		0.00 & 0.7 & 0.7 & 0.00 & 0.05 & 0.7 & 0.7 & 0.10\\
			\hline

			\hline
	\end{tabular}}
	\caption{\label{tab:example2} A toy example showing the construction of the baseline covariate $X$. Here distance indicates the distance from the border of the treated unit.}
\end{table}

Table~\ref{tab:example2} illustrates the case where the baseline covariate $X$ incorporates the information on the distance from the border into the price.
Let us redefine $X$ as follows: if the distance from the border is 0.3 or less than 0.3, $X$ simply denotes the price; if not, we add price by 0.1. The additional cost of 0.1 can be compared to the additional travel expenditures that consumers would incur. Then the extra purchases made by consumers who traveled to neighboring controls with a distance of 0.45 and a baseline price of 0.2 as a result of the policy intervention are \textit{comparable} to the loss in purchases in the treatment group with a baseline price of 0.3 as a result of having neighboring controls as opposed to having all treated neighbors in this setting.

There are a few things we need to consider in constructing $\bm{X}$.
The first is the positivity assumption (Assumption~\ref{as:pos}), which necessitates that the propensity scores of units in either the neighboring or non-neighboring control groups are not zero. Essentially, there should be overlap in the support of $\bm{X}$ between the two control groups. 
Sometimes, redefining $\bm{X}$ can make the positivity assumption more plausible. For instance, including distance itself as a variable in $\bm{X}$ would violate the positivity assumption since there are no controls with a distance of zero in Table~\ref{tab:example2}; whereas, incorporating distance information in $\bm{X}$ in the manner described above can still satisfy the positivity assumption.
We also need to consider how the implications of Assumptions~\ref{as:pt} and~\ref{as:pt2}, which condition on $\bm{X}$, differ when redefining $\bm{X}$. 
For example, in the toy example outlined in Table~\ref{tab:example2}, Assumption~\ref{as:pt2} implies that the outcome trend of the neighboring control with (Price=0.2) and (Distance=0.45) under no intervention is parallel to the outcome trend of the treatment group with (Price=0.3) under no intervention.

\section{Estimators}
\label{sec:method}

In this section, we propose new causal estimators for the offsetting effect and AOTT, and discuss their asymptotic properties. All proofs of theoretical results can be found in the Supplementary Material Section~\ref{sec:proofs}.

\subsection{A doubly-robust estimator}

We first introduce the inverse probability weighted (IPW) estimator for the offsetting effect. 
Let $\pi_{(a_{1}, a_{2})}(\bm{X})$ denote the propensity score $Pr(g(\bm{A}) = (a_{1}, a_{2}) \mid \bm{X})$ and $\pi_{(a_{1}, a_{2})}$ denote the marginal probability $Pr(g(\bm{A}) = (a_{1}, a_{2}))$, each for $(a_{1}, a_{2}) \in  \{ (1,0), (0,1), (1,1) \}$.
Let $\{\pi_{(a_{1}, a_{2})}(\bm{X})\}$ denote $\{ \pi_{(a_{1}, a_{2})} (\bm{X}): (a_{1}, a_{2}) \in \{(1,0), (0,1), (0,0)\};~ t = 0,1; ~\bm{X} \in \Omega_{\bm{X}} \}$, i.e., a collection of the propensity scores. Let $E_{n}(\cdot)$ denote empirical averages.

\begin{lemma}
\label{lemma:ipw}

Suppose that Assumptions~\ref{as:mapping}--\ref{as:pos} and~\ref{as:link}--\ref{as:pt2} hold. Then we have the following IPW estimator that provides a consistent estimator for $\delta = E(Y^{(1,1)}_{1} - Y^{(1,0)}_{1} \mid A = 1)$.
\begin{eqnarray*}
\label{eq:ipw}
\hat{\delta}^{\text{ipw}} = \hat{\pi}^{-1}_{(1,0)} E_{n} \left[ \hat{\pi}_{(1,0)}(\bm{X}) \left\{ \frac{I( g(\bm{A}) = (0,1) }{\hat{\pi}_{(0,1)}(\bm{X})} - \frac{I( g(\bm{A}) = (0,0))}{\hat{\pi}_{(0,0)}(\bm{X}) } \right\} (Y_{1} - Y_{0}) \right].
\end{eqnarray*}
\end{lemma}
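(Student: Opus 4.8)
The plan is to separate \emph{identification} from \emph{estimation}: first show that the population functional targeted by $\hat\delta^{\text{ipw}}$ equals $\delta$, then obtain consistency from standard large-sample arguments. Because $\hat\delta^{\text{ipw}}$ is a smooth function of empirical averages and of (consistently estimated) propensity scores, the estimation step is routine once identification is established, so the substance lies in identification.

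First I would replace $E_{n}$ by the population expectation $E$ and the estimated scores $\hat\pi$ by their limits $\pi$, giving the population functional
\[
\delta^{\text{ipw}} := \pi_{(1,0)}^{-1}\, E\!\left[ \pi_{(1,0)}(\bm{X})\left\{ \frac{I(g(\bm{A})=(0,1))}{\pi_{(0,1)}(\bm{X})} - \frac{I(g(\bm{A})=(0,0))}{\pi_{(0,0)}(\bm{X})}\right\}(Y_{1}-Y_{0})\right].
\]
Writing $\mu_{(a_{1},a_{2})}(\bm{X}) := E(Y_{1}-Y_{0}\mid g(\bm{A})=(a_{1},a_{2}),\bm{X})$ and conditioning on $\bm{X}$ via iterated expectations, the first term satisfies $E\{I(g(\bm{A})=(0,1))(Y_{1}-Y_{0})\mid\bm{X}\} = \pi_{(0,1)}(\bm{X})\,\mu_{(0,1)}(\bm{X})$, so the $\pi_{(0,1)}(\bm{X})$ in the denominator cancels and the term reduces to $E[\pi_{(1,0)}(\bm{X})\,\mu_{(0,1)}(\bm{X})]$; the second term reduces analogously to $E[\pi_{(1,0)}(\bm{X})\,\mu_{(0,0)}(\bm{X})]$. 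Hence $\delta^{\text{ipw}} = \pi_{(1,0)}^{-1}E[\pi_{(1,0)}(\bm{X})\{\mu_{(0,1)}(\bm{X})-\mu_{(0,0)}(\bm{X})\}]$, where positivity (Assumption~\ref{as:pos}) guarantees that the inverse weights and the conditional expectations are well defined.

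Next I would convert the observed contrast $\mu_{(0,1)}(\bm{X})-\mu_{(0,0)}(\bm{X})$ into the target counterfactual. Since every unit has $Z_{i0}=0$ and $h_{i0}=0$ at baseline, consistency (Assumption~\ref{as:consistency}) gives $Y_{0}=Y^{(0,0)}_{0}$ for all units, while at $t=1$ it gives $Y_{1}=Y^{(0,1)}_{1}$ on $\{g(\bm{A})=(0,1)\}$ and $Y_{1}=Y^{(0,0)}_{1}$ on $\{g(\bm{A})=(0,0)\}$. Substituting and cancelling the baseline $Y^{(0,0)}_{0}$ terms through the parallel-trends Assumption~\ref{as:pt2} yields $\mu_{(0,1)}(\bm{X})-\mu_{(0,0)}(\bm{X}) = E(Y^{(0,1)}_{1}-Y^{(0,0)}_{1}\mid g(\bm{A})=(0,1),\bm{X})$, the conditional spillover effect. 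Assumption~\ref{as:link} then equates this with the conditional negative offsetting effect $\delta(\bm{X}) := E(Y^{(1,1)}_{1}-Y^{(1,0)}_{1}\mid A=1,\bm{X})$.

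Finally I would apply a change-of-measure argument. Because $\pi_{(1,0)}(\bm{X})=Pr(A=1\mid\bm{X})$ and $\pi_{(1,0)}=Pr(A=1)$, Bayes' rule gives the conditional density $f(\bm{X}\mid A=1)=\pi_{(1,0)}(\bm{X})f(\bm{X})/\pi_{(1,0)}$, so $\pi_{(1,0)}^{-1}E[\pi_{(1,0)}(\bm{X})\,\delta(\bm{X})]=E[\delta(\bm{X})\mid A=1]$. By iterated expectations this equals $E(Y^{(1,1)}_{1}-Y^{(1,0)}_{1}\mid A=1)=\delta$, so $\delta^{\text{ipw}}=\delta$. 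Consistency of $\hat\delta^{\text{ipw}}$ then follows from consistency of the $\hat\pi$'s together with the law of large numbers, the continuous mapping theorem, and Slutsky's theorem. I expect the main obstacle to be the second step --- carefully tracking which potential outcome each observed $Y_{1}-Y_{0}$ represents within the two control groups and checking that the baseline terms cancel \emph{exactly} under Assumption~\ref{as:pt2} --- rather than the reweighting algebra, which is mechanical once these conditional identities are in place.
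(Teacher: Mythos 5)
Your proposal is correct and follows essentially the same route as the paper's proof: iterated expectations to cancel the inverse probability weights, consistency plus Assumption~\ref{as:pt2} to reduce the contrast of observed trends between the two control groups to the conditional spillover effect, Assumption~\ref{as:link} to convert that into the conditional offsetting contrast, and the change of measure via $\pi_{(1,0)}(\bm{X})/\pi_{(1,0)}$ to marginalize over the treated group. One point in your favor: the paper's displayed proof swaps $Y^{(1,1)}_{1}$ and $Y^{(1,0)}_{1}$ in its final two lines (which, read literally, would deliver $-\delta$), whereas your application of Assumption~\ref{as:link} carries the correct sign and lands exactly on the stated target $E(Y^{(1,1)}_{1}-Y^{(1,0)}_{1}\mid A=1)$.
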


We can use a consistent plug-in estimator for $\pi_{(1,0)}$ and $\{\pi_{(a_{1}, a_{2})}(\bm{X})\}$ to build a consistent estimator of $\hat{\delta}^{\text{ipw}}$ when $\hat{\pi}_{(1,0)}(\bm{X}) \overset{p}{\rightarrow} \pi_{(1,0)}(\bm{X})$ 
and $\hat{\pi}_{(0,0)}(\bm{X}) \overset{p}{\rightarrow} \pi_{(0,0)}(\bm{X})$ as $n \rightarrow \infty$. 
In our simulation study, we use the sample average to estimate $\pi_{(1,0)}$ and estimated propensity scores from a multinomial logistic regression model for the $\{\pi_{(a_{1}, a_{2})}(\bm{X})\}$ 's.

Next we propose combining the IPW approach with the outcome regression to construct a doubly-robust estimator for the offsetting effect (i.e., $-\delta$).
Let $\mu^{(a_{1},a_{2})}_{t}(\bm{X})$ denote the true outcome model for $E(Y_{t} \mid g(\bm{A}) = (a_{1},a_{2}), \bm{X} )$ for $t=0,1$ and $(a_{1}, a_{2})\in \{(1,0), (0,1), (0,0)\}$. 
Let  
$\{\mu^{(a_{1}, a_{2})}_{t} (\bm{X})\}$ denote $\{ \mu^{(a_{1}, a_{2})}_{t} (\bm{X}):  (a_{1}, a_{2}) \in \{(1,0), (0,1), (0,0)\}; ~t =0,1; ~\bm{X} \in \Omega_{X} \}$, i.e., a collection of the outcome regression models. 
Let $\Delta Y_{i} := Y_{i1} - Y_{i0}$ and $\Delta \mu^{(a_{1}, a_{2})}(\bm{X}) = \mu^{(a_{1}, a_{2})}_{1}(\bm{X}) - \mu^{(a_{1}, a_{2})}_{0}(\bm{X})$.

\begin{theorem}
\label{thm:dr}
Suppose that Assumptions~\ref{as:mapping}--\ref{as:pos} and~\ref{as:link}--\ref{as:pt2} hold. Then $\delta^{\text{dr}}$ is a consistent estimator for $\delta:= E(Y^{(1,1)}_{1} - Y^{(1,0)}_{1} \mid A = 1)$ when either one of the nuisance functions -- $\{\pi_{(a_{1}, a_{2})}(\bm{X})\}$ or $\{\mu^{(a_{1}, a_{2})}_{t}(\bm{X})\}$ -- is correctly specified. That is,  $\delta^{\text{dr}}$ is a doubly-robust estimator for $\delta$. 
\begin{eqnarray*}
\hat{\delta}^{\text{dr}} &=&  E_{n} \left\{ \frac{\hat{\pi}_{(1,0)}(\bm{X} )}{\hat{\pi}_{(1,0)}} \frac{I(g(\bm{A}) = (0,1)) }{\hat{\pi}_{(0,1)}(\bm{X})} \left( \Delta Y - \Delta \hat{\mu}^{(0,1)}(\bm{X}) \right) + \frac{I(A=1)}{\hat{\pi}_{(1,0)}}\Delta \hat{\mu}^{(0,1)}(\bm{X})  \right\} \\ \quad && -
E_{n} \left\{ \frac{\hat{\pi}_{(1,0)}(\bm{X} )}{\hat{\pi}_{(1,0)}} \frac{I(g(\bm{A}) = (0,0)) }{\hat{\pi}_{(0,0)}(\bm{X})} \left( \Delta Y - \Delta \hat{\mu}^{(0,0)}(\bm{X}) \right) + \frac{I(A=1)}{\hat{\pi}_{(1,0)}}\Delta \hat{\mu}^{(0,0)}(\bm{X})  \right\}.
\end{eqnarray*}
\end{theorem}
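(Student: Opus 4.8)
The plan is to recognize $\hat{\delta}^{\mathrm{dr}}$ as the difference of two augmented inverse-probability-weighted (AIPW) estimators, to show that each piece is individually doubly robust for a conditional-mean-difference functional restricted to the treated, and then to subtract. The first step is to record the identification target that the AIPW machinery must hit. Combining the exposure mapping and consistency (Assumptions~\ref{as:mapping}--\ref{as:consistency}, together with $h_{i0}\equiv 0$) with the parallel-trends Assumption~\ref{as:pt2} gives, for the conditional spillover effect on the neighboring controls,
\begin{equation*}
E\!\left(Y^{(0,1)}_1 - Y^{(0,0)}_1 \mid g(\bm{A})=(0,1), \bm{X}\right) = \Delta\mu^{(0,1)}(\bm{X}) - \Delta\mu^{(0,0)}(\bm{X}),
\end{equation*}
since the observed outcome change among $g(\bm{A})=(0,1)$ units equals $\Delta\mu^{(0,1)}(\bm{X})$, while Assumption~\ref{as:pt2} substitutes the unobserved counterfactual trend $E(Y^{(0,0)}_1 - Y^{(0,0)}_0\mid g(\bm{A})=(0,1),\bm{X})$ by $\Delta\mu^{(0,0)}(\bm{X})$. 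Assumption~\ref{as:link} then identifies the conditional offsetting effect with the negative of this quantity, and marginalizing over $\bm{X}\mid A=1$ yields $\delta = E[\Delta\mu^{(0,1)}(\bm{X}) - \Delta\mu^{(0,0)}(\bm{X}) \mid A=1]$, which is also the population limit of $\hat{\delta}^{\mathrm{ipw}}$ in Lemma~\ref{lemma:ipw}.

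Next I would write $\hat{\delta}^{\mathrm{dr}} = \hat{T}_{(0,1)} - \hat{T}_{(0,0)}$, where $\hat{T}_{(a_1,a_2)}$ is the corresponding bracketed empirical average, and claim each $\hat{T}_{(a_1,a_2)}$ is an AIPW estimator of $E[\Delta\mu^{(a_1,a_2)}(\bm{X})\mid A=1]$. By the law of large numbers and the continuous mapping theorem — justified by the positivity Assumption~\ref{as:pos}, which keeps all weights bounded — $\hat{T}_{(0,1)} \overset{p}{\rightarrow} T_{(0,1)}$, the same functional evaluated at the probability limits $\tilde{\pi}$, $\tilde{\mu}$ of the nuisance estimators. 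A point worth flagging is that the marginal $\hat{\pi}_{(1,0)}$ is estimated by a sample average, so its limit is the correct $\pi_{(1,0)}=Pr(A=1)$ regardless of any model specification.

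I would then establish double robustness of $T_{(0,1)}$ through the two standard cases, using iterated expectations conditional on $(g(\bm{A}),\bm{X})$ and the Bayes identity $E[\{\pi_{(1,0)}(\bm{X})/\pi_{(1,0)}\}\, f(\bm{X})] = E[f(\bm{X})\mid A=1]$. If the outcome models are correct ($\tilde{\mu}=\mu$), then $E(\Delta Y - \Delta\mu^{(0,1)}(\bm{X}) \mid g(\bm{A})=(0,1),\bm{X}) = 0$ pointwise, so the weighting term vanishes for any bounded $\tilde{\pi}$ and the augmentation term alone returns $E[\Delta\mu^{(0,1)}(\bm{X})\mid A=1]$. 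If instead the propensity scores are correct ($\tilde{\pi}=\pi$), the weighting term collapses to $E[\Delta\mu^{(0,1)}(\bm{X}) - \Delta\tilde{\mu}^{(0,1)}(\bm{X})\mid A=1]$, which exactly cancels the bias $-E[\Delta\tilde{\mu}^{(0,1)}(\bm{X})\mid A=1]$ carried by the augmentation term, again yielding $E[\Delta\mu^{(0,1)}(\bm{X})\mid A=1]$. The identical argument applied to $T_{(0,0)}$ gives $E[\Delta\mu^{(0,0)}(\bm{X})\mid A=1]$, and subtracting recovers the identified $\delta$ from the first step, proving consistency under either specification.

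The main obstacle is really the bookkeeping in the identification step rather than the AIPW cancellation: one must track precisely which potential outcome is observed for each exposure group at each time (via consistency and $h_{i0}\equiv 0$), and apply Assumption~\ref{as:link} with the correct sign so that the unobserved offsetting effect is re-expressed entirely through the observed regressions $\Delta\mu^{(0,1)}$ and $\Delta\mu^{(0,0)}$. Beyond that, the only technical inputs are positivity (to guarantee bounded weights and the interchange of limits) and the separate, always-consistent estimation of the marginal $\pi_{(1,0)}$, which is what makes the reweighting-to-the-treated factor $\pi_{(1,0)}(\bm{X})/\pi_{(1,0)}$ behave correctly in both misspecification regimes.
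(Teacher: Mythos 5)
Your proposal is correct and follows essentially the same route as the paper: identify $\delta$ as $E[\Delta\mu^{(0,1)}(\bm{X})-\Delta\mu^{(0,0)}(\bm{X})\mid A=1]$ via Assumptions~\ref{as:consistency}, \ref{as:link}, and \ref{as:pt2}, then run the standard AIPW case analysis with iterated expectations, showing the residual terms vanish under correct outcome models and the augmentation bias cancels under correct propensity scores. The only cosmetic difference is that the paper handles the correct-propensity case by reducing $\hat{\delta}^{\text{dr}}$ to $\hat{\delta}^{\text{ipw}}$ and invoking Lemma~\ref{lemma:ipw}, whereas you perform the same cancellation inline.
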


Similarly, we can obtain $\hat{\delta}^{\text{dr}}$ by plugging in the predicted values from the propensity scores and outcome regression models, i.e., plugging in $\{ \hat{\pi}_{(a_{1}, a_{2})}(\bm{X}) \}$ and $\{ \hat{\mu}^{(a_{1}, a_{2})}_{t} (\bm{X})\}$.


\begin{theorem}
\label{thm:variance}
Assume the same conditions as in Theorem~\ref{thm:dr}. 
Then the efficient influence function for $\delta:= E(Y^{(1,1)}_{1} - Y^{(1,0)}_{1} \mid A = 1)$ is given by: 
\begin{eqnarray*}
&& \psi_{\delta, i}(Y_{1}, Y_{0}, \bm{A}, \bm{X})\\ \quad  &=& \frac{\pi_{(1,0)}(\bm{X}_{i} )}{\pi_{(1,0)}} \frac{I(g_{i}(\bm{A}) = (0,1)) }{\pi_{(0,1)}(\bm{X}_{i})} \left( \Delta Y_{i} - \Delta \mu^{(0,1)}(\bm{X}_{i}) \right) + \frac{I(A_{i}=1)}{\pi_{(1,0)}}\Delta \mu^{(0,1)}(\bm{X}_{i})  \\  && - \left\{
 \frac{\pi_{(1,0)}(\bm{X}_{i} )}{\pi_{(1,0)}} \frac{I(g_{i}(\bm{A}) = (0,0)) }{\pi_{(0,0)}(\bm{X}_{i})} \left( \Delta Y_{i} - \Delta \mu^{(0,0)}(\bm{X}_{i}) \right) + \frac{I(A_{i}=1)}{\pi_{(1,0)}}\Delta \mu^{(0,0)}(\bm{X}_{i})\right\} \\ && - \frac{I(A_{i}=1)}{\pi_{(1,0)}}\delta.  
\end{eqnarray*}

Therefore, the semiparametric efficiency bound for all regular estimators for $\delta$ is given by: 
\begin{eqnarray}
\label{eq:bound}
\sigma^{2}_{\delta_{\text{sb}}} := E\left[ \{ \psi_{\delta, i}(Y_{1}, Y_{0}, \bm{A}, \bm{X}) \}^{2} \right].
\end{eqnarray}
\end{theorem}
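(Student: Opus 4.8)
The plan is to derive the efficient influence function directly by pathwise differentiation of the target functional, and then to upgrade ``a gradient'' to ``the efficient gradient'' by observing that the observed-data model is locally nonparametric. First I would recast $\delta$ as a functional of the observed-data law $P$. Combining the identification assumptions with Lemma~\ref{lemma:ipw} and taking conditional expectations given $\bm{X}$ (using $E[\frac{I(g(\bm{A})=(0,1))}{\pi_{(0,1)}(\bm{X})}\Delta Y \mid \bm{X}] = \Delta\mu^{(0,1)}(\bm{X})$, and likewise for $(0,0)$, together with $E[\pi_{(1,0)}(\bm{X})f(\bm{X})]/\pi_{(1,0)} = E[f(\bm{X})\mid A=1]$) yields the closed form
\[
\delta(P) = E_P\!\left[\Delta\mu^{(0,1)}(\bm{X}) - \Delta\mu^{(0,0)}(\bm{X}) \;\middle|\; A = 1\right].
\]
Crucially, the parallel-trends and offsetting-effect restrictions (Assumptions~\ref{as:pt}--\ref{as:pt2}) involve only unobservable potential outcomes and therefore impose no testable constraint on $P$; positivity (Assumption~\ref{as:pos}) is an open condition that does not reduce the tangent space. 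Hence the observed-data model is nonparametric, its tangent space is all of $L_2^0(P)$, and there is a \emph{unique} gradient, which is necessarily the efficient influence function. This is what lets me conclude efficiency once I have exhibited any mean-zero gradient matching $\psi_{\delta,i}$.

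Next I would compute the pathwise derivative $\frac{d}{d\theta}\delta(P_\theta)\big|_{\theta=0}$ along a regular one-dimensional submodel $\{P_\theta\}$ with score $S$. Working from the clean form, $\delta(P)$ depends on $P$ through (i) the two conditional outcome means $\Delta\mu^{(0,1)}(\bm{X}),\Delta\mu^{(0,0)}(\bm{X})$, and (ii) the law of $\bm{X}$ within the treated subpopulation (equivalently $\pi_{(1,0)}$ and $\pi_{(1,0)}(\bm{X})=\Pr(A=1\mid\bm{X})$). Differentiating piece (i) and representing the score of each conditional outcome law in $L_2$ produces the two inverse-probability-weighted residual terms $\frac{\pi_{(1,0)}(\bm{X})}{\pi_{(1,0)}}\frac{I(g(\bm{A})=(0,1))}{\pi_{(0,1)}(\bm{X})}(\Delta Y-\Delta\mu^{(0,1)}(\bm{X}))$ and its $(0,0)$ analogue; note the IPW denominators appear in the representer even though the clean form of $\delta$ does not contain them. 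Differentiating piece (ii) yields $\frac{I(A=1)}{\pi_{(1,0)}}\big(\Delta\mu^{(0,1)}(\bm{X})-\Delta\mu^{(0,0)}(\bm{X})-\delta\big)$, i.e.\ the two augmentation terms together with the centering term $-\frac{I(A=1)}{\pi_{(1,0)}}\delta$. Summing the contributions gives $\frac{d}{d\theta}\delta(P_\theta)\big|_{0}=E[\psi_{\delta,i}\,S]$ with $\psi_{\delta,i}$ exactly the stated expression, and $E[\psi_{\delta,i}]=0$ follows since $E[\frac{I(A=1)}{\pi_{(1,0)}}]=1$ and the uncentered terms integrate to $\delta$.

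The main obstacle I anticipate is the representer step in differentiating piece (i): passing from $E[\frac{\pi_{(1,0)}(\bm{X})}{\pi_{(1,0)}}E\{(\Delta Y-\Delta\mu^{(0,1)}(\bm{X}))S \mid g(\bm{A})=(0,1),\bm{X}\}]$ to the full-data representation $E[\frac{\pi_{(1,0)}(\bm{X})}{\pi_{(1,0)}}\frac{I(g(\bm{A})=(0,1))}{\pi_{(0,1)}(\bm{X})}(\Delta Y-\Delta\mu^{(0,1)}(\bm{X}))\,S]$ requires decomposing the overall score $S$ into orthogonal conditional scores and carefully tracking the reweighting factor $\pi_{(1,0)}(\bm{X})/\pi_{(1,0)}$, which acts simultaneously as a numerator weight and through the marginal normalization; several cross terms must be shown to cancel so that only the compact form survives. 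The Neyman-orthogonality already implicit in the double robustness of Theorem~\ref{thm:dr} organizes this: insensitivity of the functional to first-order nuisance perturbations at the truth forces the cross terms to vanish. An equivalent and slightly shorter route would be to establish directly that $\hat{\delta}^{\text{dr}}$ is regular and asymptotically linear with influence function $\psi_{\delta,i}$ and then invoke uniqueness of the gradient in the nonparametric model; either way, since $\psi_{\delta,i}$ is the unique mean-zero gradient it is the efficient influence function, and the semiparametric efficiency bound is $\sigma^2_{\delta_{\text{sb}}}=E[\psi_{\delta,i}^2]=\operatorname{Var}(\psi_{\delta,i})$.
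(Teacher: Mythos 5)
Your proposal is correct and follows essentially the same route as the paper's proof: reduce $\delta$ to the identified functional $E\!\left[\Delta\mu^{(0,1)}(\bm{X})-\Delta\mu^{(0,0)}(\bm{X})\mid A=1\right]$, compute the pathwise derivative along a regular parametric submodel, and verify that $\psi_{\delta,i}$ is a mean-zero gradient satisfying $E[\psi_{\delta,i}\,S]=\frac{d}{d\theta}\delta(P_\theta)\big|_{\theta=0}$. Your explicit observation that the identification assumptions restrict only the unobservable potential outcomes, so the observed-data model is locally nonparametric, the tangent space is all of $L_2^0(P)$, and the gradient is therefore unique and efficient, is a slightly more careful justification of the final step than the paper's brief remark that $\psi_{\delta}$ lies on the tangent space.
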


\begin{theorem}
\label{thm:asymptotic}
Suppose Assumptions~\ref{as:mapping}--\ref{as:pos} and~\ref{as:link}--\ref{as:pt2}, and standard regularity conditions hold.
Then when both of the nuisance functions (i.e., both of $\{\pi_{(a_{1}, a_{2})}(\bm{X})\}$ and $\{\mu^{(a_{1}, a_{2})}_{t}(\bm{X})\}$) are correctly specified, the following result holds:
\begin{eqnarray*}
\sqrt{n}(\hat{\delta}^{\text{dr}} - \delta) \longrightarrow N(0, \sigma^{2}_{\delta_{sb}}).
\end{eqnarray*}
\end{theorem}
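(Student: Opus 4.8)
The plan is to show that, under correct specification of both nuisance collections, $\hat{\delta}^{\text{dr}}$ is asymptotically linear with influence function exactly equal to the efficient influence function $\psi_{\delta,i}$ of Theorem~\ref{thm:variance}; the stated limit then follows from the i.i.d.\ central limit theorem, since $E[\psi_{\delta,i}]=0$ and $E[\psi_{\delta,i}^{2}]=\sigma^{2}_{\delta_{sb}}$ by the definition in~\eqref{eq:bound}. Write $\Theta^{*}$ for the true conditional nuisances $\big(\{\pi_{(a_{1},a_{2})}(\bm{X})\},\{\mu^{(a_{1},a_{2})}_{t}(\bm{X})\}\big)$ and $\hat{\Theta}$ for their plug-in estimates, and let $\phi_{\delta,i}(\cdot;\Theta,p)$ be the summand of $\hat{\delta}^{\text{dr}}$ regarded as a function of $\Theta$ and of the scalar marginal $p=\pi_{(1,0)}$, so that $\hat{\delta}^{\text{dr}}=E_{n}[\phi_{\delta,i}(\cdot;\hat{\Theta},\hat{\pi}_{(1,0)})]$. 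Since $\hat{\pi}_{(1,0)}=E_{n}[I(A=1)]$ enters as a smooth (ratio) functional of empirical averages, I would first reduce to the case of known conditional nuisances and then linearize in $\hat{\pi}_{(1,0)}$.

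First I would control the replacement of $\hat{\Theta}$ by $\Theta^{*}$. Writing
\[
\sqrt{n}\,E_{n}\big[\phi_{\delta,i}(\cdot;\hat{\Theta},p)-\phi_{\delta,i}(\cdot;\Theta^{*},p)\big]=\underbrace{\sqrt{n}\,\tfrac{\partial}{\partial\Theta}E[\phi_{\delta,i}(\cdot;\Theta^{*},p)]\cdot(\hat{\Theta}-\Theta^{*})}_{\text{drift}}+\text{(emp.\ proc.)}+\text{(2nd order)},
\]
the empirical-process term is $o_{p}(1)$ by stochastic equicontinuity (a Donsker/entropy argument supplied by the smoothness of the parametric working models and the boundedness of the weights guaranteed by Assumption~\ref{as:pos}), and the second-order remainder is $o_{p}(1)$ because it is a product of the two nuisance-estimation errors, each $O_{p}(n^{-1/2})$ under correct parametric specification, hence $O_{p}(n^{-1})$. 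The decisive step --- and what I expect to be the main obstacle --- is to show the drift vanishes, i.e.\ that the moment function is Neyman-orthogonal in each conditional nuisance direction: $\tfrac{\partial}{\partial\Theta}E[\phi_{\delta,i}(\cdot;\Theta^{*},p)]=0$. This requires computing the Gateaux derivative of $E[\phi_{\delta,i}]$ separately in the five directions $\mu^{(0,1)},\mu^{(0,0)},\pi_{(0,1)}(\bm{X}),\pi_{(0,0)}(\bm{X}),\pi_{(1,0)}(\bm{X})$ and verifying cancellation at the truth. The cancellation is the payoff of the augmented construction: a perturbation of $\mu^{(a_{1},a_{2})}$ enters through both the IPW residual $\Delta Y-\Delta\mu^{(a_{1},a_{2})}(\bm{X})$ and the explicit regression term, and the two contributions cancel because $E[I(g(\bm{A})=(a_{1},a_{2}))/\pi_{(a_{1},a_{2})}(\bm{X})\mid\bm{X}]=1$ equates the effective weight in the two pieces; dually, a perturbation of a propensity score $\pi_{(a_{1},a_{2})}(\bm{X})$ is annihilated because the residual has conditional mean zero, $E[\Delta Y\mid g(\bm{A})=(a_{1},a_{2}),\bm{X}]=\Delta\mu^{(a_{1},a_{2})}(\bm{X})$. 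Carrying this bookkeeping through all five directions, including the reweighting factor $\pi_{(1,0)}(\bm{X})/\pi_{(1,0)}$ that transports the neighboring averages onto the treated population, is the most delicate part of the argument.

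Having reduced to known conditional nuisances at cost $o_{p}(n^{-1/2})$, it remains to linearize in the scalar $\hat{\pi}_{(1,0)}=E_{n}[I(A=1)]$. Here $\hat{\delta}^{\text{dr}}$ is the ratio $E_{n}[N_{i}]/E_{n}[I(A_{i}=1)]$ of two empirical means, where $N_{i}$ is the numerator assembled with true conditional nuisances and $E[N_{i}]=\pi_{(1,0)}\delta$ by Theorem~\ref{thm:dr}; a standard delta-method expansion then gives
\[
\sqrt{n}(\hat{\delta}^{\text{dr}}-\delta)=\sqrt{n}\,E_{n}\!\left[\frac{N_{i}-I(A_{i}=1)\,\delta}{\pi_{(1,0)}}\right]+o_{p}(1)=\sqrt{n}\,E_{n}[\psi_{\delta,i}]+o_{p}(1),
\]
since $\big(N_{i}-I(A_{i}=1)\delta\big)/\pi_{(1,0)}$ is precisely $\psi_{\delta,i}$; note that the term $-I(A_{i}=1)\delta/\pi_{(1,0)}$ in the efficient influence function arises from this ratio linearization rather than from orthogonality. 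Combining the two reductions yields $\sqrt{n}(\hat{\delta}^{\text{dr}}-\delta)=\sqrt{n}\,E_{n}[\psi_{\delta,i}]+o_{p}(1)$, and the classical central limit theorem together with $E[\psi_{\delta,i}^{2}]=\sigma^{2}_{\delta_{sb}}$ from~\eqref{eq:bound} completes the proof; that the limiting variance attains the semiparametric efficiency bound is automatic because $\psi_{\delta,i}$ is, by Theorem~\ref{thm:variance}, the efficient influence function.
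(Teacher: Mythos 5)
Your proposal is correct and follows essentially the same route as the paper's proof, which performs an explicit second-order Taylor expansion around the pseudo-true parametric nuisance values (following Theorem A.1 of Sant'Anna and Zhao) and then verifies term by term that the derivative contributions vanish under joint correct specification --- precisely the orthogonality cancellations you identify (residuals with zero conditional mean killing the propensity-score directions, and the matching of effective weights $E[I(g(\bm{A})=(a_{1},a_{2}))/\pi_{(a_{1},a_{2})}(\bm{X})\mid\bm{X}]=1$ killing the outcome-model directions), before applying the Lindeberg--L\'{e}vy central limit theorem. Your Neyman-orthogonality framing and the explicit delta-method treatment of the ratio in $\hat{\pi}_{(1,0)}$ are presentational refinements of the same argument rather than a different proof.
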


The above theorems state that $\hat{\delta}^{\text{dr}}$ is doubly-robust, and also locally semiparametrically efficient, i.e., its asymptotic variance achieves the semiparametric efficiency bound when the working models for the nuisance functions are correctly specified.
However, in practice, one still needs to choose a particular estimation procedure for the nuisance functions. Moreover, the results of Theorem~\ref{thm:asymptotic} are not guaranteed when either of the nuisance functions is not correctly specified. For more guidance on the choice of the first-step estimators to further improve the standard doubly-robust DiD estimators' doubly-robust inference, see \cite{sant2020doubly}.

Finally, we combine the doubly-robust estimators for the ATT with $\delta$ to obtain the doubly-robust estimator for the AOTT (=ATT + $\delta$):
\begin{eqnarray*}
\label{eq:delta}
\hat{\tau}^{\text{dr}} &=&  E_{n} \left\{ \left(\frac{I(A = 1)}{\hat{\pi}_{(1,0)}}-\frac{\hat{\pi}_{(1,0)}(\bm{X}) I(g(\bm{A}) = (0,1) ) }{ \hat{\pi}_{(1,0)} \hat{\pi}_{(0,1)}(\bm{X}) } \right) (\Delta Y - \Delta \hat{\mu}^{(0,0)} (\bm{X})) \right\} + \hat{\delta}^{\text{dr}}
\end{eqnarray*}

\begin{corollary}
\label{cor:AOTT}
Suppose that Assumptions~\ref{as:mapping}--\ref{as:pt2} hold. Then as the sample size increases, i.e., as $n \rightarrow \infty$, the proposed estimator $\hat{\tau}^{\text{dr}}$ converges to $\tau$ when either one of the nuisance functions (i.e., either $\{\pi_{(a_{1}, a_{2})}(\bm{X})\}$ or $\{\mu^{(a_{1}, a_{2})}_{t}(\bm{X})\}$) is correctly specified. Moreover, the following result holds when both of the nuisance functions are correctly specified and standard regularity conditions are satisfied. 
\begin{eqnarray*}
\sqrt{n}(\hat{\tau}^{\text{dr}} -  \tau) \longrightarrow N(0, \sigma^{2}_{\tau}),
\end{eqnarray*}
where $\sigma^{2}_{\tau}$ is the semiparametric efficiency bound for all regular estimators for $\tau$.
\end{corollary}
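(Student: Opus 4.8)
The plan is to exploit the additive decomposition $\tau = \text{ATT} + \delta$ from~\eqref{eq:AOTT} together with the matching decomposition $\hat{\tau}^{\text{dr}} = \hat{\text{ATT}}^{\text{dr}} + \hat{\delta}^{\text{dr}}$, where $\hat{\text{ATT}}^{\text{dr}}$ denotes the first empirical-average term in the definition of $\hat{\tau}^{\text{dr}}$ and $\hat{\delta}^{\text{dr}}$ is the doubly-robust estimator of Theorem~\ref{thm:dr}. Consistency then reduces to establishing consistency of each summand. For $\hat{\delta}^{\text{dr}}$ this is exactly Theorem~\ref{thm:dr}: under Assumptions~\ref{as:mapping}--\ref{as:pos} and~\ref{as:link}--\ref{as:pt2} it converges in probability to $\delta$ whenever either $\{\pi_{(a_1,a_2)}(\bm{X})\}$ or $\{\mu^{(a_1,a_2)}_t(\bm{X})\}$ is correctly specified. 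For $\hat{\text{ATT}}^{\text{dr}}$ I would verify, by the same conditioning argument used to prove Theorem~\ref{thm:dr} (pass to the population limit, condition on $\bm{X}$, and invoke the parallel-trends Assumption~\ref{as:pt}), that it is a doubly-robust estimator of the ATT. Because the two summands draw on the \emph{same} nuisance collections $\{\pi_{(a_1,a_2)}(\bm{X})\}$ and $\{\mu^{(a_1,a_2)}_t(\bm{X})\}$, correctness of either entire collection makes both summands consistent simultaneously; by Slutsky's theorem $\hat{\tau}^{\text{dr}} \to \tau = \text{ATT} + \delta$ in probability, which is the stated double robustness.

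For the limiting distribution I would, assuming both nuisance collections are correctly specified and the standard regularity conditions invoked in Theorem~\ref{thm:asymptotic} hold, derive an asymptotically linear representation for each summand. The mixed-bias (Neyman-orthogonal) structure underlying the doubly-robust construction --- the same structure exploited in Theorems~\ref{thm:variance} and~\ref{thm:asymptotic} --- guarantees that the first-step estimation error in $\{\hat{\pi}_{(a_1,a_2)}(\bm{X})\}$ and $\{\hat{\mu}^{(a_1,a_2)}_t(\bm{X})\}$ enters only at second order and is therefore $o_p(n^{-1/2})$. Consequently each summand is asymptotically linear with influence function equal to its efficient influence function, and I would combine them to obtain
\begin{equation*}
\sqrt{n}(\hat{\tau}^{\text{dr}} - \tau) = \frac{1}{\sqrt{n}}\sum_{i=1}^{n}\left(\psi_{\text{ATT},i} + \psi_{\delta,i}\right) + o_p(1),
\end{equation*}
where $\psi_{\delta,i}$ is the efficient influence function from Theorem~\ref{thm:variance} and $\psi_{\text{ATT},i}$ is the (standard) efficient influence function for the doubly-robust DiD ATT. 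A central limit theorem applied to the i.i.d.\ mean then yields $\sqrt{n}(\hat{\tau}^{\text{dr}} - \tau) \to N(0, \sigma^2_\tau)$ with $\sigma^2_\tau = E[(\psi_{\text{ATT},i} + \psi_{\delta,i})^2]$.

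It remains to identify $\sigma^2_\tau$ as the semiparametric efficiency bound for $\tau$. Here I would appeal to the linearity of the pathwise derivative: since $\tau$ is the sum of the two pathwise-differentiable functionals $\text{ATT}$ and $\delta$, its efficient influence function is $\psi_{\tau,i} = \psi_{\text{ATT},i} + \psi_{\delta,i}$, and hence the efficiency bound for all regular estimators of $\tau$ is $E[\psi_{\tau,i}^2]$, which coincides with the asymptotic variance just obtained. Thus $\hat{\tau}^{\text{dr}}$, being regular and asymptotically linear with influence function equal to $\psi_{\tau,i}$, attains the bound and is locally semiparametrically efficient.

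I expect the main obstacle to be the asymptotic-linearization step rather than the algebra: one must carefully state and verify the conditions (e.g.\ $\sqrt{n}$-consistent, asymptotically linear parametric nuisance estimators, or product-type rate conditions on $\{\hat{\pi}\}$ and $\{\hat{\mu}\}$, together with suitable integrability) under which the nuisance contributions are genuinely negligible, and to confirm that the shared nuisances induce no additional first-order terms when the two summands are added. A secondary point requiring care is that adding two individually efficient estimators is efficient for the sum only because the efficient influence functions add; I would make the pathwise-derivative argument explicit so that $\sigma^2_\tau = E[(\psi_{\text{ATT},i} + \psi_{\delta,i})^2]$ is rigorously the efficiency bound and not merely an achievable variance.
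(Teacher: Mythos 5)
Your proposal is correct and follows essentially the same route as the paper: decompose $\tau=\text{ATT}+\delta$ and $\hat{\tau}^{\text{dr}}=\hat{\text{ATT}}^{\text{dr}}+\hat{\delta}^{\text{dr}}$, invoke Theorem~\ref{thm:dr} for the $\delta$ piece and double robustness of the DiD ATT estimator for the other, conclude consistency by Slutsky, and obtain asymptotic normality and efficiency by summing the two efficient influence functions $\psi_{\text{ATT},i}+\psi_{\delta,i}$. The only difference is that the paper simply cites Theorem~1(a) of \cite{sant2020doubly} for the ATT component rather than re-deriving it, and your explicit remark that the pathwise derivatives add (so the summed influence function is genuinely the efficient one for $\tau$) is a point the paper leaves implicit.
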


\subsection{Extension to multiple time points}
\label{ssec:multiple}

We can extend our proposed estimator to multiple pre- and post-intervention time periods setting. Suppose that we have $T$ pre-treatment time periods (i.e., $t=-(T-1), -(T-2), \ldots, 0$) and $T$ post-treatment time periods (i.e., $t=1,2,\ldots, T$). 
Let us modify Assumptions~\ref{as:pt} and~\ref{as:pt2} as follows.
\begin{assumption}[Conditional counterfactual parallel trends]  For $t=1,2,\ldots, T$:
\label{as:multi_pt}
    \begin{eqnarray*}
    E( Y^{(0,0)}_{t} - Y^{(0,0)}_{t-T} \mid A =1, \bm{X} ) &=& E(  Y^{(0,0)}_{t} - Y^{(0,0)}_{t-T} \mid g(\bm{A}) = (0,0), \bm{X} ) \\
     E( Y^{(0,0)}_{t} - Y^{(0,0)}_{t-T} \mid g(\bm{A})=(0,1), \bm{X} )  &=& E(  Y^{(0,0)}_{t} - Y^{(0,0)}_{t-T} \mid g(\bm{A}) = (0,0), \bm{X} ).
    \end{eqnarray*}
\end{assumption}
Then when $\Delta Y_{it, (T)} :=  Y_{it} - Y_{i,t-T}$ and $\Delta \mu^{(a_{1}, a_{2})}_{t,(T)} (\bm{X}_{i}) := \mu^{(a_{1}, a_{2})}_{t} (\bm{X}_{i}) - \mu^{(a_{1}, a_{2})}_{t-T}(\bm{X}_{i})$ for $t>0$, we have a consistent estimator of $E(Y^{(1,1)}_{t} - Y^{(1,0)}_{t} \mid g_{i}(\bm{A}) = (1,0))$ as follows:
\begin{eqnarray*}
\hat{\delta}^{\text{dr}}_{t} &=&  E_{n} \left\{ \frac{\pi_{(1,0)}(\bm{X} )}{\pi_{(1,0)}} \frac{I(g(\bm{A}) = (0,1)) }{\pi_{(0,1)}(\bm{X})} \left( \Delta Y_{t, (T)} - \Delta \mu^{(0,1)}_{t, (T)}(\bm{X}) \right) + \frac{I(A=1)}{\pi_{(1,0)}}\Delta \mu^{(0,1)}_{t, (T)}(\bm{X})  \right\} \\ &-& 
E_{n} \left\{ \frac{\pi_{(1,0)}(\bm{X} )}{\pi_{(1,0)}} \frac{I(g(\bm{A}) = (0,0)) }{\pi_{(0,0)}(\bm{X})} \left( \Delta Y_{t,(T)} - \Delta \mu^{(0,0)}_{t,(T)}(\bm{X}) \right) + \frac{I(A=1)}{\pi_{(1,0)}}\Delta \mu^{(0,0)}_{t, (T)}(\bm{X})  \right\}, 
\end{eqnarray*}
for $t=1,2,\ldots, T$. We can then similarly define a time-specific doubly robust estimator for the AOTT, $E(Y^{(1,1)}_{t} - Y^{(0,0)}_{t} \mid g_{i}(\bm{A}) = (1,0))$ for $t > 0$, using $\Delta Y_{it, (T)}$ and $\Delta \mu^{(a_{1}, a_{2})}_{t, (T)}$:
\begin{eqnarray*}
\hat{\tau}^{\text{dr}}_{t} &=& E_{n} \left\{ \left(\frac{I(A = 1)}{\pi_{(1,0)}}-\frac{\pi_{(1,0)}(\bm{X} ) I(g(\bm{A}) = (0,1) ) }{ \pi_{(1,0)} \pi_{(0,1)}(\bm{X} ) } \right) (\Delta Y_{t,(T)} - \Delta \mu^{(0,0)}_{t, (T)} (\bm{X} )) \right\}
+ \hat{\delta}^{\text{dr}}_{t}.
\end{eqnarray*}
Finally, we take an average of the time-specific estimator and obtain the time averaged AOTT estimator given by:
\begin{eqnarray}
\label{eq:time_AOTT}
\hat{\tau}^{\text{dr, avg}} &=&  \frac{1}{T}\sum\limits_{t =1}^{T} \hat{\tau}^{\text{dr}}_{t}.
\end{eqnarray}

\section{Simulations}
\label{sec:simulation}

Through simulation studies, we examine the asymptotic behavior of the proposed estimators in finite samples when the distributions of baseline covariates $\bm{X}$ differ between groups and $\bm{X}$ induces heterogeneity in spillover effects.
We consider two time points ($t=0$: pre-intervention; $t=1$: post-intervention) and $n$ units. We allow time-varying effects of $\bm{X}_{i}$ on the outcome in the absence of treatment, as well as on the treatment effects. We first generate two baseline covariates, $X_{i1}$ and $X_{i2}$ as:
\begin{eqnarray*}
( X^{*}_{i1}, X^{*}_{i2})^{T} & \quad \overset{i.i.d.}{\sim} \quad & N \left( \begin{pmatrix} 0 \\ 0 \end{pmatrix} , \begin{pmatrix} 1.0 & ~0.3 \\ 0.3 & ~1.0  \end{pmatrix}  \right); \\
X_{ik} &=& \min ( \max(-2.0,  \round{10 X^{*}_{ik}}/10), 2.0) , \quad  k = 1,2;~i=1,2,\ldots, n.
\end{eqnarray*}
By transforming a continuous $X^{*}_{ik}$ into $X_{ik}$,
we let $X_{ik}$ take a finite number of values and have the minimum and maximum of $X_{ik}$ as $-2.0$ and $2.0$, respectively. This transformation is merely for convenience in redefining $\bm{X}$ tailored to Assumption~\ref{as:link}.

We then consider the following data generating model for treatment assignments. For $(a_{1}, a_{2}) \in \{(1,0), (0, 1) \}$:
\begin{eqnarray*}
\ln\left\{ \frac{Pr( g_{i}(\bm{A}) = (a_{1}, a_{2} ))}{Pr( g_{i}(\bm{A}) = (0,0))} \right\} &=&  \beta_{0,(a_{1}, a_{2})} + X_{1i} \beta_{1,(a_{1}, a_{2})} + X_{2i} \beta_{2, (a_{1}, a_{2})}.
\end{eqnarray*}
where $\boldsymbol{\beta}_{(a_{1}, a_{2})} = (\beta_{0, (a_{1}, a_{2})}, \beta_{1, (a_{1}, a_{2})}, \beta_{2, (a_{1}, a_{2})})$ for $(a_{1}, a_{2}) \in \{(0,0), (1,0), (0,1)\}$ and $\boldsymbol{\beta}^{T}_{(0,0)} = (0, 0,0)$, $\boldsymbol{\beta}^{T}_{(1,0)} = (-0.5, 1,0.5)$, $\boldsymbol{\beta}^{T}_{(0,1)} = (0.3, -0.5, -0.5)$.
Next, given $g_{i}(\bm{A})$ and $\bm{X}_{i}  = (X_{i1}, X_{i2})$, we generate the potential outcomes under the control exposure for units with $g_{i}(\mathbf{A}) = (a_{1}, a_{2})$ as:
\begin{eqnarray*}
 Y^{(0,0)}_{it} &=& \gamma_{t} + X_{1i}\lambda_{1t} + X_{2i}\lambda_{2t} + \alpha_{(a_{1},a_{2})} + \epsilon_{it},
\end{eqnarray*}
for $(a_{1}, a_{2
}) \in \{ (0,0), (0,1), (1,1)\}$ and $t=0,1$. Here, $(\alpha_{(0,0)}, \alpha_{(1,0)}, \alpha_{(0,1)}) = (0.0, 0.5, -1.0)$ represent baseline group effects. We set the time-specific effects as $\gamma_{t=0} = 1$, $\gamma_{t=1} = 1.5$ and the time-varying effects of our covariates as 
$\boldsymbol{\lambda}^{T}_{t=0} = (0.5, 0.5)$, $\boldsymbol{\lambda}^{T}_{t=1} = (-0.5, 1.0)$.
Then, we generate random errors, $\epsilon_{it}$, as:
\begin{eqnarray*}
 (\epsilon_{i1},\epsilon_{i2})^{T} & \quad \overset{i.i.d.}{\sim} \quad & N \left( \begin{pmatrix} 0 \\ 0 \end{pmatrix} , \begin{pmatrix} 1.0 & ~0.2 \\ 0.2 & ~1.0  \end{pmatrix}  \right), ~ i=1,2, \ldots, n.
\end{eqnarray*}
The non-zero correlation of 0.2 reflects temporal correlation within units. 
We then generate the potential outcomes for control units with $g_{i}(\bm{A}) = (0, 1)$ under $g_{it}(\bm{Z}_{t}) = (0,1)$ and for treated units with $g_{i}(\bm{A}) = (1,0)$ under $g_{it}(\bm{Z}_{t}) = (1,0)$ as follows: 
\begin{eqnarray*}
 Y^{(0,1)}_{it}  &=& Y^{(0,0)}_{it} +  \theta_{(0,1)} I( t > 0) \{( 1+0.5X_{i1} ) + (1-2X^2_{i2}) \} \\ 
 Y^{(1,0)}_{it}   &=& Y^{(0,0)}_{it} + \theta_{(1,0)} I(t>0) (1-0.5X_{i1}).
\end{eqnarray*}
We set $\theta_{(0,1)} = 0.5$ and $\theta_{(1,0)} = -1.0$ to reflect that the intervention has a negative effect on the treated group outcomes but a positive effect on the neighboring control group outcomes when $\bm{X}_{i} = (0,0)$. However, these two intervention effects may differ according to covariates $X_{i1}$ and $X_{i2}$.   
The true offsetting and AOTT effects are empirically derived from the true data generating models. The true ATT, offsetting effect, and AOTT are -0.708, -0.183, and -0.526, respectively. 

We apply our doubly-robust estimator $\hat{\tau}^{\text{dr}}$ to estimate the AOTT under the following three scenarios: (a) both the outcome regression model and the propensity score model are correctly specified; (b) the propensity score model is correctly specified but the outcome model omits the interaction terms between $X_{ki}$ and $I(t>0)$ ($k=1,2$) and $X^{2}_{i2}$ and $I(g_{i}(\bm{A}) = (0,1)) I(t>0)$; and (c) the outcome regression model is correctly specified but the propensity score model omits the $X_{i1}$ term and includes $\exp(X_{i2})$ instead of $X_{i2}$. We replicate the experiment $1000$ times for each scenario with different sample sizes ($n=500, 1000$ and $2000$ observations), with aggregate results displayed in Table~\ref{tab:results}. Under all scenarios (a)--(c), the bias converges to zero and the coverage rates of 95\% confidence intervals are close to nominal.

\begin{table}[H]
\centering
\resizebox{\textwidth}{!}{\begin{tabular}{l|ccc|ccc|ccc}
Case: & \multicolumn{3}{c|}{(a) All correct} & \multicolumn{3}{c|}{(b) Incorrect outcome regression} &  \multicolumn{3}{c}{(c) Incorrect propensity scores} \\
 & Bias & SE & CR & Bias & SE & CR & Bias & SE & CR \\
 \hline
 $n=500$ & 0.029 & 0.456 & 0.941 & 0.055 & 0.781 & 0.942 & 0.007 & 0.244 & 0.940 \\
 $n=1000$ & 0.010 &0.329 &0.939 & 0.030 & 0.581 & 0.945 & -0.001  & 0.172 & 0.928\\
$n=2000$  & -0.001  &0.235 & 0.951 & 0.027 & 0.434 & 0.944 & 0.009 & 0.121 & 0.938\\
\hline
\end{tabular}}
\caption{\label{tab:results} The performance of the proposed estimator $\hat{\tau}^{\text{dr}}$ when (a) all of the propensity scores and the outcome regression are correctly specified, (b) the outcome regression is misspecified, and (c) the propensity scores are misspecified. Standard errors (SE) and coverage rates (CR) of 95\% confidence intervals are based on the influence function-based variance estimator}
\end{table}

In the Supplementary Material Section~\ref{sec:additional}, we demonstrate that the IPW and outcome regression-based estimators are sensitive to model misspecifications. 
We also present results from simulations when there are multiple time points in Supplementary Material Section~\ref{sec:multiple}.

\section{Data application}
\label{sec:data}

We apply our proposed method to volume sales data obtained from Information Resources Inc (IRI)~\citep{muth2016understanding} that were used for our previous studies of the Philadelphia beverage tax ~\citep{roberto2019association, bleich2020association, lawman2020one, gibson2021no}. Here, we evaluate the effect of the beverage tax  on Philadelphia under the hypothetical scenario that Philadelphia's neighboring counties were also directly exposed to the policy. 
Table~\ref{tab:2by2} describes the definition of four different treatment groups according to the exposure mapping defined in Equation~\eqref{eq:exposure}. We use Baltimore as a non-neighboring control that satisfies the conditional counterfactual parallel trends assumption (Assumptions~\ref{as:pt} and~\ref{as:pt2}).
\begin{table}[H]
\centering
\caption{\label{tab:2by2} A combination of the four treatment groups in Philadelphia beverage tax study}
\resizebox{0.7\textwidth}{!}{\begin{tabular}{c|c|c}
  \hline
  & $A_{i}=0$ & $A_{i}=1$   \\ 
  \hline
$h_{i}(\bm{A})=0$ &  Isolated control & Philadelphia adjacent to controls \\
\hline
$h_{i}(\bm{A})=1$ &  Neighboring control &  Philadelphia surrounded by the treated \\ 
\hline
\end{tabular}}
\end{table}

For each unit (i.e., store), we consider 26 time points representing 4-week periods with end dates of 01/31/2016, 02/28/2016, $\cdots$, 12/31/2017 ($t=-12, -11, 0, 1, \cdots, 13$). Then, $Z_{it} = 1$ for stores in Philadelphia if the end date of the 4-week period is 01/29/2017 or after (i.e., $t>0$). We focus on the effect of the tax implementation on the total unit sales of (1) taxed individual size beverages and (2) taxed family size beverages in supermarkets. We have 13, 15, and 26 supermarkets in Baltimore, neighboring counties of Philadelphia, and Philadelphia, respectively. 

In the propensity score model, we include weighted price per ounce over each 4-week period (with the end date of 01/01/2016) of family-sized sweetened beverages and individual-sized sweetened beverages as baseline covariates $\bm{X}_{i}$.
In addition, we use publicly available zip-code level characteristics, such as the proportion of Black, Hispanic, Asian, and male residents and average income per household. 
In both outcome models for outcomes (1) and (2), we include the fixed effects of treatment group indicators, $I(t > 0)$, and their interactions with store-level random intercepts. We also include covariate effects for the aforementioned zip-code level characteristics and the baseline ($t=-12$) weighted price per ounce of both family- and individual-sized sweetened beverages as well as the interaction between price and treatment effects. The estimates from the propensity score and the outcome models are presented in Supplementary Material Section~\ref{sec:D}.

Table~\ref{tab:bev_result} presents the causal estimates for the ATT, ATN, and AOTT and their 95\% confidence intervals (CIs). We observe negative effects (reduction in sales) for individual and family-sized beverage sales in Philadelphia, but positive effects (increase in sales) in nearby counties for both sizes. In particular, the reduction in Philadelphia sales (-5291; 95\% CI: [-6003, -4579]) is nearly comparable to the increase in neighboring counties' sales (4693; 95\% CI: [2697, 6688]). However, we estimate that about half of the spillover effects (2334; 95\% CI: [1900, 2768]) would be returned to Philadelphia in the counterfactual scenario in which its neighboring counties were all treated, resulting in the AOTT estimate of -2959 (95\% CI: [-3888, -2026]). On the other hand, we estimate that roughly three fourths of the spillover effects would offset the ATT of the taxed family-sized beverages. All of these findings suggest significant effects of the excise tax on Philadelphia, which would have been evident even if its neighboring counties were also directly affected by the tax implementation.

\begin{table}[H]
\centering
\resizebox{0.9\textwidth}{!}{\begin{tabular}{llll}

   ATT & ATN & Offsetting effect & AOTT \\ 
  \hline
 \multicolumn{4}{l}{(1) Taxed individual sized beverages} \\  
   -5291  [-6003, -4579] &   4693 [2697, 6688] & -2334 [-1900, -2768] &  -2957 [-3888, -2026]\\
   \hline
\multicolumn{4}{l}{(2) Taxed family sized beverages}   \\
 -28416 [-31024, -25807] & 10886 [8945, 12832] &  -7518 [-6465, -8571] & -20898 [-23830, -17966]\\
 \hline
\end{tabular}}
\caption{\label{tab:bev_result}  Doubly-robust estimates (95\% confidence interval) for each causal estimand}
\end{table}

The Philadelphia beverage tax data present a few challenges. First, each store is de-identified, so each unit's actual distance to the border is unknown. We can only approximate the distance using zip code-level data (e.g., the shortest distance from the centroid of each zip code to the city border). Baseline covariates are also defined using census data at the zip-code level. Moreover, Assumption~\ref{as:link} may not be fully defendable since spillover behaviors are often dependent on \textit{post-} rather than \textit{pre-intervention} covariates (e.g., price changes after the intervention).
Still, our condition may be reasonable if consumers engage in cross-border purchasing after seeing announcements of the tax, regardless of the actual post-intervention price increases. For instance, in Philadelphia, there was widespread advertising of the tax on local TV channels, billboards, and in ads on the side of city buses.   
Lastly, we stratify retailers by type (supermarkets, mass merchandise stores, and pharmacies) and focus only on supermarket sales. However, it is possible that consumers are willing to change their purchasing habits following the intervention (e.g., switching from supermarkets to mass merchandise stores).
Despite these limitations, these data offer us a unique opportunity to infer diverse causal effects in the presence of spillover.

\section{Discussion}
\label{sec:discussion}


When projecting the effects of a proposed policy intervention, policy makers must consider how effects of the policy may spill over into neighboring regions. When spillover effects are prominent, differences in the neighboring policy status will have large ramifications on the effectiveness of the implemented policy. Therefore, policy makers may be interested in estimating what would have happened under an unobserved counterfactual neighborhood treatment assignment. In this work, we develop methods to estimate these counterfactual effects by using information on how a policy affected neighboring controls to infer what would have happened to the target population under a proposed neighborhood intervention status.
For example, our estimators may provide insights for policy makers hoping to use data from one region with a particular neighborhood intervention status (e.g., the beverage tax in Philadelphia with no treated neighbors) to project what would happen in a different intervention implementation (e.g., a beverage tax in Pittsburgh where some of the neighboring counties also implement the tax). Still, incorporating data from policy intervention implementations from different years, different policy specifications (e.g., tax rate), and varying population demographics will likely pose additional challenges. If multiple non-neighboring control groups that were neither directly nor indirectly affected by the policy at the time of interest are available, researchers could consider applying the synthetic control approach~\citep{abadie2010synthetic}.

There are a few limitations of our proposed estimators. First, we assume that observations are i.i.d. panel data. This assumption is likely to be violated when the outcomes (e.g., unit sales) are dependent on spatial factors not accounted for by the observed covariates. 
In addition, the outcome model assumes linearity in the error terms. Non-linear DiD methods (e.g., logistic regression) often fail to satisfy the counterfactual parallel trends assumptions (e.g., Assumptions~\ref{as:pt} and~\ref{as:pt2}). In these situations, one may consider an alternate way to specify the identification assumptions using a latent index model approach~\citep{athey2006identification, blundell2009alternative}.

In future work, we will investigate the generalizability and transportability of policy effects to a target population that may have different covariate compositions as well as different neighborhood treatment assignments. This direction will bring us one step closer to realizing the broader goal of developing causal inference methodologies to assist policy-makers  in designing and implementing future policy interventions that are effective for other -- possibly underrepresented -- populations.

\section*{Acknowledgement}
Youjin Lee, Gary Hettinger, and Nandita Mitra were supported by National Science Foundation (Award number: SES-2149716). We thank Drs.~Christina A. Roberto and Laura Gibson for providing insightful feedback on analyzing the Philadelphia beverage tax data.

\bibliographystyle{Chicago}
\bibliography{example}

\newpage
\spacingset{1.5}
\appendix

\setcounter{equation}{0}
\setcounter{figure}{0}
\setcounter{table}{0}
\setcounter{page}{1}
\setcounter{section}{0}
\setcounter{theorem}{0}
\renewcommand{\theequation}{S\arabic{equation}}
\renewcommand{\thetheorem}{S\arabic{theorem}}
\renewcommand{\thelemma}{S\arabic{lemma}}
\renewcommand{\thefigure}{S\arabic{figure}}
\renewcommand{\thesection}{S\arabic{section}}
\renewcommand{\thetable}{S\arabic{table}}
\renewcommand{\thefigure}{S\arabic{figure}}
\renewcommand{\theassumption}{S\arabic{assumption}}
\allowdisplaybreaks

\begin{center}
{\Large Supplementary Materials for ``Policy effect evaluation under counterfactual neighborhood interventions in the presence of spillover"}
\end{center}

\section{Proofs}
\label{sec:proofs}

\subsection{Proof of Lemma~\ref{lemma:ipw}}
\begin{proof}[Proof of Lemma~\ref{lemma:ipw}]
The following proof demonstrates that  $\hat{\delta}^{\text{ipw}}$ in Equation~\eqref{eq:delta} in the main text is a consistent estimator for $\delta := E(Y^{(1,1)}_{1} - Y^{(1,0)}_{1} \mid A = 1)$. Suppose that $\{\pi_{(a_{1}, a_{2})}(\bm{X})\}$ are correctly specified.
Then by the law of iterated expectations, we have the following:

\begin{eqnarray*}
\hat{\delta}^{\text{ipw}} & \overset{p}{\rightarrow} &
E \left\{  \frac{\pi_{(1,0)}(\bm{X})}{\pi_{(1,0)}} \left( \frac{I( g(\bm{A}) = (0,1) }{\pi_{(0,1)}(\bm{X})} - \frac{I( g(\bm{A}) = (0,0))}{\pi_{(0,0)}(\bm{X} ) } \right) (Y_{1} - Y_{0})  \right\} \\
&=& E\left[ E\left\{ \frac{\pi_{(1,0)}(\bm{X})}{\pi_{(1,0)}} \left( \frac{I( g(\bm{A}) = (0,1) }{\pi_{(0,1)}(\bm{X})} - \frac{I( g(\bm{A}) = (0,0))}{\pi_{(0,0)}(\bm{X}) } \right) (Y_{1} - Y_{0}) \mid \bm{X} \right\} \right] \\
&=& E\left[\frac{\pi_{(1,0)}(\bm{X})}{\pi_{(1,0)}} \left\{\frac{E\left( I(g(\bm{A}) = (0,1))(Y_{1} - Y_{0} ) \mid  \bm{X} \right)}{\pi_{(0,1)}(\bm{X})} -  \frac{E\left(  I(g(\bm{A}) = (0,0))(Y_{1} - Y_{0}) \mid \bm{X} \right) }{\pi_{(0,0)}(\bm{X})} \right\} \right] \\
&=&  E\left[\frac{\pi_{(1,0)}(\bm{X})}{\pi_{(1,0)}}\left\{ E\left( Y_{1} - Y_{0} \mid g(\bm{A}) = (0,1), \bm{X} \right)  - E\left(   Y_{1} - Y_{0} \mid g(\bm{A})=(0,0), \bm{X} \right) \right\}  \right] \\
&=&  E\left[\frac{\pi_{(1,0)}(\bm{X})}{\pi_{(1,0)}} \left\{ E( Y^{(0,1)}_{1} - Y^{(0,0)}_{0} \mid g(\bm{A}) = (0,1), \bm{X} ) - E(   Y^{(0,0)}_{1} - Y^{(0,0)}_{0} \mid g(\bm{A})=(0,0), \bm{X} )\right\}  \right] \\
&=&  E\left[\frac{\pi_{(1,0)}(\bm{X})}{\pi_{(1,0)}} \left\{ E( Y^{(0,1)}_{1} - Y^{(0,0)}_{0} \mid g(\bm{A}) = (0,1), \bm{X} ) - E(   Y^{(0,0)}_{1} - Y^{(0,0)}_{0} \mid g(\bm{A})=(0,1), \bm{X} )\right\}  \right] \\
&=&  E\left[\frac{\pi_{(1,0)}(\bm{X})}{\pi_{(1,0)}} E\left\{ Y^{(0,1)}_{1} - Y^{(0,0)}_{1} \mid g(\bm{A}) = (0,1), \bm{X} \right\} \right]  \\
&=&  E\left[\frac{\pi_{(1,0)}(\bm{X})}{\pi_{(1,0)}} E( Y^{(1,0)}_{1} - Y^{(1,1)}_{1} \mid A = 1, \bm{X} ) \right]  \\
&=&  E( Y^{(1,0)}_{1} - Y^{(1,1)}_{1} \mid A = 1 ).
\end{eqnarray*}

\end{proof}

\subsection{Proof of Theorem~\ref{thm:dr}}
\begin{proof}[Proof of Theorem~\ref{thm:dr}]
Suppose that $\Delta \hat{\mu}^{(0,1)}(\bm{X}) \overset{p}{\longrightarrow} E(Y_{1} - Y_{0} \mid g(\bm{A}) = (0,1), \bm{X})$ and  $\Delta \hat{\mu}^{(0,0)}(\bm{X}) \overset{p}{\longrightarrow}  E(Y_{1} - Y_{0} \mid g(\bm{A}) = (0,0), \bm{X})$ (i.e., correctly specified outcome model) as $n \rightarrow \infty$. Then the following results hold by the law of large numbers and the law of iterated expectations: 
\begin{eqnarray*}
 && E_{n}\left[  \frac{I(A = 1)}{\pi_{(1,0)}} \left\{ \Delta \hat{\mu}^{(0,1)}(\bm{X}) - \Delta \hat{\mu}^{(0,0)}(\bm{X}) \right\} \right] \\ 
 &\overset{p}{\longrightarrow}&  E\left[ \frac{I(A = 1)}{\pi_{(1,0)}} \left\{ E(Y^{(0,1)}_{1} - Y^{(0,0)}_{0} \mid g(\bm{A}) = (0,1), \bm{X}) - E(Y^{(0,0)}_{1} - Y^{(0,0)}_{0} \mid g(\bm{A}) = (0,0), \bm{X}) \right\}  \right] \\
 &=&  E\left[ \frac{I(A = 1)}{\pi_{(1,0)}} \left\{ E(Y^{(0,1)}_{1} - Y^{(0,0)}_{0} \mid g(\bm{A}) = (0,1), \bm{X}) - E(Y^{(0,0)}_{1} - Y^{(0,0)}_{0} \mid g(\bm{A}) = (0,1), \bm{X}) \right\}  \right] \\ 
 &=&  E\left[ \frac{I(A = 1)}{\pi_{(1,0)}} \left\{ E(Y^{(0,1)}_{1} - Y^{(0,0)}_{0} \mid g(\bm{A}) = (0,1), \bm{X}) - E(Y^{(0,0)}_{1} - Y^{(0,0)}_{0} \mid g(\bm{A}) = (0,1), \bm{X}) \right\}  \right] \\
 &=& E\left[ \frac{I(A = 1)}{\pi_{(1,0)}}  E(Y^{(0,1)}_{1} - Y^{(0,0)}_{1} \mid g(\bm{A}) = (0,1), \bm{X})    \right]
 \\ &=& E\left[ \frac{I(A_{i} = 1)}{\pi_{(1,0)}}  E(Y^{(1,1)}_{1} - Y^{(1,0)}_{1} \mid A = 1, \bm{X})    \right] \\ 
 &=& E(Y^{(1,1)}_{1} - Y^{(1,0)}_{1} \mid A = 1) \\
 &=& \delta.
\end{eqnarray*}
Therefore, it suffices to show that the remaining term in $\hat{\delta}^{\text{dr}}$ converges to zero. 
\begin{eqnarray*}
&& E_{n} \left[ \frac{\hat{\pi}_{(1,0)}(\bm{X}) I(g(\bm{A}) = (0,1)) }{\hat{\pi}_{(1,0)} \hat{\pi}_{(0,1)} (\bm{X})} \left\{ \Delta Y - \Delta \hat{\mu}^{(0,1)}(\bm{X}) \right\}  \right]  \\
&\overset{p}{\longrightarrow}& E\left[ E\left\{ \frac{\pi_{(1,0)}(\bm{X}) I(g(\bm{A}) = (0,1)) }{\pi_{(1,0)} \pi_{(0,1)} (\bm{X})} \left( \Delta Y - \Delta \mu^{(0,1)}(\bm{X}) \right)  \mid  \bm{X} \right\} \right] \\ 
&=& E\left[ \frac{\pi_{(1,0)}(\bm{X}) }{\pi_{(1,0)}} E \left( \Delta Y   \mid   g(\bm{A}) = (0,1), \bm{X} \right) - \frac{\pi_{(1,0)}(\bm{X}) }{ \pi_{(1,0)} \pi_{(0,1)}(\bm{X}) } \Delta \mu^{(0,1)}(\bm{X}) E(I(g(\bm{A}) = (0,1)) \mid \bm{X})    \right] \\ 
&=& E\left[ \frac{\pi_{(1,0)}(\bm{X}) }{\pi_{(1,0)}} E \left( \Delta Y   \mid   g(\bm{A}) = (0,1), \bm{X} \right) - \frac{\pi_{(1,0)}(\bm{X}) }{ \pi_{(1,0)} } \Delta \mu^{(0,1)}(\bm{X})    \right] \\  
&=& E\left[ \frac{\pi_{(1,0)}(\bm{X})}{\pi_{(1,0)}} \left\{ E(\Delta Y \mid g(\bm{A}) = (0, 1), \bm{X}) - E(Y_{1} - Y_{0} \mid g(\bm{A}) = (0, 1), \bm{X}) \right\} \right] \\ 
&=& 0. 
\end{eqnarray*}
Similarly, we can show that
\begin{eqnarray*}
&& E_{n} \left[ \frac{\hat{\pi}_{(1,0)}(\bm{X}) I(g(\bm{A}) = (0,0)) }{\hat{\pi}_{(1,0)} \hat{\pi}_{(0,0)} (\bm{X})} \left\{ \Delta Y - \Delta \hat{\mu}^{(0,0)}(\bm{X}) \right\}   \right]  \\
 &\overset{p}{\longrightarrow}& E\left[ E\left\{ \frac{\pi_{(1,0)}(\bm{X}) I(g(\bm{A}) = (0,0)) }{\pi_{(1,0)} \pi_{(0,0)} (\bm{X})} \left( \Delta Y - \Delta \mu^{(0,0)}(\bm{X}) \right)  \mid  \bm{X} \right\} \right] \quad \text{ as } n \rightarrow \infty \\
&=& 0.
\end{eqnarray*}
Therefore, $\hat{\delta}^{\text{dr}} \overset{p}{\longrightarrow} \delta$ when the outcome regression models are correctly specified.

Now suppose that the propensity score model is correctly specified. That is, when $\hat{\pi}_{(a_{1},a_{2})}(\bm{X}) \overset{p}{\longrightarrow} Pr(I(g(\bm{A})) = (a_{1}, a_{2}) \mid \bm{X})$ for $(a_{1}, a_{2}) \in \{(1,0), (0,1), (0,0)\}$.
Then the following results hold.

\begin{eqnarray*}
&&   E_{n} \left\{ \left( \frac{\hat{\pi}_{(1,0)}(\bm{X}) I(g(\bm{A}) = (0,1)) }{ \hat{\pi}_{(1,0)} \hat{\pi}_{(0,1)}(\bm{X}) } - \frac{I(A = 1)}{\hat{\pi}_{(1,0)}} \right)  \Delta \hat{\mu}^{(0,1)} (\bm{X}) \right\}   \\
&\overset{p}{\longrightarrow} & E \left[ E \left\{ \left( \frac{\pi_{(1,0)}(\bm{X}) I(g(\bm{A}) = (0,1)) }{ \pi_{(1,0)} \pi_{(0,1)}(\bm{X}) } - \frac{I(A = 1)}{\pi_{(1,0)}} \right)  \Delta \mu^{(0,1)} (\bm{X}) \mid \bm{X} \right\} \right] 
\\ &=&  E \left[ \Delta \mu^{(0,1)}(\bm{X}) \left\{ \frac{\pi_{(1,0)}(\bm{X})\pi_{(0,1)}(\bm{X}) }{\pi_{(1,0)} \pi_{(0,1)}(\bm{X}) } - \frac{Pr(I(A_{i} = 1) \mid \bm{X}) }{\pi_{(1,0)}}  \right\} \right]  \\ 
&=& 0.
\end{eqnarray*}
Similarly, we can show that 
\begin{eqnarray*}
 E_{n} \left[ \left\{ \frac{\hat{\pi}_{(1,0)}(\bm{X}) I(g(\bm{A}) = (0,0)) }{ \hat{\pi}_{(1,0)} \hat{\pi}_{(0,0)}(\bm{X}) } - \frac{I(A = 1)}{\hat{\pi}_{(1,0)}} \right\}  \Delta \hat{\mu}^{(0,0)} (\bm{X})  \right]  \overset{p}{\longrightarrow} 0.
\end{eqnarray*}
Therefore, 
\begin{eqnarray*}
\hat{\delta}^{\text{dr}} &-& E_{n}\left[   \left\{\frac{\hat{\pi}_{(1,0)}(\bm{X}) I(g(\bm{A}) = (0,1)) }{ \hat{\pi}_{(1,0)} \hat{\pi}_{(0,1)}(\bm{X}) } - \frac{I(A = 1)}{\hat{\pi}_{(1,0)}} \right\} \Delta \hat{\mu}^{(0,1)} (\bm{X}) \right]\\
&-& E_{n}\left[   \left\{ \frac{\hat{\pi}_{(1,0)}(\bm{X}) I(g(\bm{A}) = (0,0)) }{ \hat{\pi}_{(1,0)} \hat{\pi}_{(0,0)}(\bm{X}) } - \frac{I(A = 1)}{\hat{\pi}_{(1,0)}} \right\}  \Delta \hat{\mu}^{(0,0)} (\bm{X}) \right] \\ 
&=& \hat{\delta}^{\text{ipw}}.
\end{eqnarray*}
Lemma~\ref{lemma:ipw} shows that $\hat{\delta}^{\text{ipw}} \overset{p}{\rightarrow} \delta$. This concludes that $\hat{\delta}^{\text{dr}}$ is a consistent estimator for $\delta$ when the propensity score model is correctly specified.

\end{proof}

\subsection{Proof of Theorem~\ref{thm:variance}}
\begin{proof}[Proof of Theorem~\ref{thm:variance}]

By Assumptions Assumptions~\ref{as:mapping}--\ref{as:pos} and~\ref{as:link}--\ref{as:pt2}, we can represent the target estimand as follows: 
\begin{eqnarray*}
\delta &=& \int\{ E(Y_{1} - Y_{0} \mid g(\bm{A}) = (0,1), \bm{X} = \bm{x}) - E(Y_{1} - Y_{0} \mid g(A) = (0,0), \bm{X} = \bm{x}) \} \\ && \quad dP(\bm{x} \mid g(\bm{A}) = (1,0)) \\ 
&=& \int \int \{ (y_{1} - y_{0}) dP(y_{1}, y_{0} \mid g(\bm{A}) = (0,1), \bm{X} = \bm{x}) \} dP(\bm{x} \mid g(\bm{A}) = (1,0)) \\ 
&& - \int\int \{ (y_{1}-y_{0}) dP(y_{1}, y_{0} \mid g(\bm{A}) = (0,0), \bm{X} = \bm{x}) \} dP(\bm{x} \mid g(\bm{A}) = (1,0)).
\end{eqnarray*}
Our observed data are i.i.d. $O_{i} = \{Y_{i1}, Y_{i0}, g_{i}(\bm{A}), \bm{X}_{i}\}_{i=1}^{n}$ from some unknown probability distribution $P_{0}$. Suppose that the distribution $P_{0}$ has the following density: 
\begin{eqnarray*}
p(O) &=& p(y_{1}, y_{0} \mid g(\bm{A}) = (1,0), \bm{x})^{I(g(\bm{A}) = (1,0))} \pi_{(1,0)}(\bm{x})^{I(g(\bm{A}) = (1,0))} \\
&& \times p(y_{1}, y_{0} \mid g(\bm{A}) = (0, 1), \bm{x})^{I(g(\bm{A}) = (0,1))} \pi_{(0,1)}(\bm{x})^{I(g(\bm{A}) = (0,1))} \\
&& \times p(y_{1}, y_{0} \mid g(\bm{A}) = (0,0), \bm{x})^{I(g(\bm{A})) = (0,0)} \pi_{(0,0)}(\bm{x})^{I(g(\bm{A})) = (0,0)} \times p(\bm{x}).
\end{eqnarray*}

Let $p_{\epsilon}(O) = p(O;\epsilon)$ denote a parametric submodel with parameter $\epsilon \in \mathbb{R}$. Let $l (y \mid x;\epsilon) = \log p(y \mid x;\epsilon)$ for any $(X,Y) \in O$ and $l^{\prime}_{\epsilon}(y; \epsilon) = \partial l(y;\epsilon) / \partial \epsilon$. 
Then the corresponding score of the above density is given by: 
\begin{eqnarray*}
l^{\prime}_{\epsilon}(O; \epsilon) &=& I(g(\bm{A}) = (1,0))  \{ l^{\prime}_{\epsilon} (y_{1}, y_{0} \mid g(\bm{A}) = (1,0), \bm{x}; \epsilon) + l^{\prime}_{\epsilon}(g(\bm{A}) = (1,0) \mid \bm{x}; \epsilon) \}\\
&& + I(g(\bm{A}) = (0,1)) \{ l^{\prime}_{\epsilon} (y_{1}, y_{0} \mid g(\bm{A}) = (0, 1), \bm{x}; \epsilon) +  l^{\prime}_{\epsilon}(g(\bm{A}) = (0,1) \mid \bm{x}; \epsilon) \} \\
&& + I(g(\bm{A})) = (0,0) \{ l^{\prime}_{\epsilon}(y_{1}, y_{0} \mid g(\bm{A}) = (0,0), \bm{x}; \epsilon) + l^{\prime}_{\epsilon}(g(\bm{A}) = (0,0) \mid \bm{x}; \epsilon) \} + l^{\prime}_{\epsilon}(\bm{x}; \epsilon).
\end{eqnarray*}

\begin{eqnarray*}
\delta^{\prime}_{\epsilon}(\epsilon) &=& 
\int\int \{ (y_{1} - y_{0}) l^{\prime}_{\epsilon}(y_{1}, y_{0} \mid g(\bm{A}) = (0, 1), \bm{x}; \epsilon) dP(y_{1}, y_{0} \mid g(\bm{A})=(0,1), \bm{x}; \epsilon)  \\ 
&&  - (y_{1}-y_{0}) l^{\prime}_{\epsilon}(y_{1}, y_{0} \mid g(\bm{A}) = (0,0), \bm{x}; \epsilon) dP(y_{1}, y_{0} \mid g(\bm{A}) = (0,0), \bm{x}; \epsilon) \}dP(\bm{x} \mid g(\bm{A}) = (1,0); \epsilon) \\
&& + \int\int \{  (y_{1} - y_{0}) dP(y_{1}, y_{0} \mid g(\bm{A}) = (0,1), \bm{x}; \epsilon) \\
&&  - (y_{1}-y_{0}) dP(y_{1}, y_{0} \mid g(\bm{A}) = (0,0), \bm{x}; \epsilon) \} l^{\prime}_{\epsilon}(\bm{x} \mid g(\bm{A}) = (1,0); \epsilon) dP(\bm{x} \mid g(\bm{A}) = (1,0); \epsilon)
\\ 
&=& 
\int\int \{ (y_{1} - y_{0}) l^{\prime}_{\epsilon}(y_{1}, y_{0} \mid g(\bm{A}) = (0, 1), \bm{x}; \epsilon) dP(y_{1}, y_{0} \mid g(\bm{A})=(0,1), \bm{x}; \epsilon)  \\ 
&&  - (y_{1}-y_{0}) l^{\prime}_{\epsilon}(y_{1}, y_{0} \mid g(\bm{A}) = (0,0), \bm{x}; \epsilon) dP(y_{1}, y_{0} \mid g(\bm{A}) = (0,0), \bm{x}; \epsilon) \}dP(\bm{x} \mid g(\bm{A}) = (1,0); \epsilon) \\
&& + \int\int \{  (y_{1} - y_{0}) dP(y_{1}, y_{0} \mid g(\bm{A}) = (0,1), \bm{x}; \epsilon)  - (y_{1}-y_{0}) dP(y_{1}, y_{0} \mid g(\bm{A}) = (0,0), \bm{x}; \epsilon) \}  
\\ && \quad \times \{ l^{\prime}_{\epsilon}(g(\bm{A}) = (1,0) \mid \bm{X} = \bm{x}; \epsilon) + l^{\prime}_{\epsilon}(\bm{x}; \epsilon)  \} dP(\bm{x} \mid g(\bm{A}) = (1,0); \epsilon).
\end{eqnarray*}
The last equality uses
\begin{eqnarray*}
dP(\bm{x} \mid g(\bm{A}) = (1,0)) =  \frac{dP(g(\bm{A}) = (1,0) \mid \bm{X} = \bm{x}) dP(\bm{X} = \bm{x}) }{dP(g(\bm{A}) = (1,0))}  ,
\end{eqnarray*}
so that
\begin{eqnarray*}
&& l^{\prime}_{\epsilon}(\bm{x} \mid g(\bm{A}) = (1,0); \epsilon) dP(\bm{x} \mid g(\bm{A}) = (1,0); \epsilon) \\ &=& \frac{\partial}{ \partial \epsilon}\left\{ \frac{dP(g(\bm{A}) = (1,0) \mid  \bm{x}; \epsilon ) dP(\bm{x}; \epsilon)}{ dP(g(\bm{A}) = (1,0))}  \right\} \\
&=& \frac{dP^{\prime}_{\epsilon}(g(\bm{A}) = (1,0) \mid \bm{x}; \epsilon)dP(\bm{x}; \epsilon)  + dP(g(\bm{A}) = (1,0) \mid \bm{x}; \epsilon) dP^{\prime}_{\epsilon}(\bm{x}) }{   dP(g(\bm{A}) = (1,0))  }    
\\ &=& \{ l^{\prime}_{\epsilon}(g(\bm{A}) = (1,0) \mid \bm{x}; \epsilon)+ l^{\prime}_{\epsilon}(\bm{x};\epsilon) \} \frac{ dP(g(\bm{A} = (1,0) \mid \bm{x}; \epsilon)) dP(\bm{x}; \epsilon) }{dP(g(\bm{A}) = (1,0))} \\
&=& \{ l^{\prime}_{\epsilon}(g(\bm{A}) = (1,0) \mid \bm{x}; \epsilon)+ l^{\prime}_{\epsilon}(\bm{x}; \epsilon) \} dP(\bm{x} \mid g(\bm{A}) = (1,0); \epsilon)
.
\end{eqnarray*}
We now demonstrate that the proposed influence function $\psi$ is a pathwise derivative, i.e., $E(\psi l^{\prime}_{\epsilon}(O; \epsilon = 0)) = \delta^{\prime}_{\epsilon}(\epsilon = 0)$. 
First,
\begin{eqnarray*}
&& E\left[ \frac{\pi_{(1,0)}(\bm{X})}{\pi_{(1,0)}} \frac{I(g(\bm{A})=(0,1))}{\pi_{(0,1)}(\bm{X})} (Y_{1} - Y_{0}) l^{\prime}_{\epsilon}(O; \epsilon)\right] \\
&=& E\left[ \frac{\pi_{(1,0)}(\bm{X})}{\pi_{(1,0)}} \frac{I(g(\bm{A})=(0,1))}{\pi_{(0,1)}(\bm{X})} (Y_{1} - Y_{0}) l^{\prime}_{\epsilon}(y_{1}, y_{0} \mid g(\bm{A}) = (0,1), \bm{x}; \epsilon)\right] \\
&=& E\left[ \frac{\pi_{(1,0)}(\bm{X})}{\pi_{(1,0)}} E\left\{ (Y_{1} - Y_{0}) l^{\prime}_{\epsilon}(y_{1}, y_{0} \mid g(\bm{A}) = (0,1), \bm{x}; \epsilon) \mid g(\bm{A}) = (0, 1), \bm{x} \right\} \right]  \\ 
&=& \int \int (y_{1} - y_{0}) l^{\prime}_{\epsilon}(y_{1}, y_{0} \mid g(\bm{A}) = (0,1), \bm{x}; \epsilon) dP(y_{1}, y_{0} \mid g(\bm{A}) = (0, 1), \bm{x}) \frac{\pi_{(1,0)}(\bm{x})}{\pi_{(1,0)}} dP(\bm{x}) \\ 
&=& \int \int (y_{1} - y_{0}) l^{\prime}_{\epsilon}(y_{1}, y_{0} \mid g(\bm{A}) = (0,1), \bm{x}; \epsilon) dP(y_{1}, y_{0} \mid g(\bm{A}) = (0, 1), \bm{x}) dP(\bm{x} \mid g(\bm{A}) = (1,0)).
\end{eqnarray*}
Similarly, we can show that 
\begin{eqnarray*}
&& E\left[ \frac{\pi_{(1,0)}(\bm{X})}{\pi_{(1,0)}} \frac{I(g(\bm{A})=(0,0))}{\pi_{(0,0)}(\bm{X})} (Y_{1} - Y_{0}) l^{\prime}_{\epsilon}(O; \epsilon)\right] \\
&=& \int \int (y_{1} - y_{0}) l^{\prime}_{\epsilon}(y_{1}, y_{0} \mid g(\bm{A}) = (0,0), \bm{x}; \epsilon) dP(y_{1}, y_{0} \mid g(\bm{A}) = (0, 0), \bm{x}) dP(\bm{x} \mid g(\bm{A}) = (1,0); \epsilon).
\end{eqnarray*}

Second,
\begin{eqnarray*}
&&E\left[ \frac{\pi_{(1,0)}(\bm{X})}{\pi_{(1,0)}} \frac{I(g(\bm{A})=(0,1))}{\pi_{(0,1)}(\bm{X})} \Delta \mu^{(0,1)}(\bm{X}) l^{\prime}_{\epsilon}(O; \epsilon) \right] \\ 
&=& E\left[   \frac{\pi_{(1,0)}(\bm{X})}{\pi_{(1,0)}} \frac{I(g(\bm{A})=(0,1))}{\pi_{(0,1)}(\bm{X})} \Delta \mu^{(0,1)}(\bm{X}) l^{\prime}_{\epsilon}(g(\bm{A}) = (0,1) \mid \bm{x}; \epsilon)\right]
\\ &=& \int \Delta \mu^{(0,1)}(\bm{x}) \{  l^{\prime}_{\epsilon}(g(\bm{A}) = (1,0) \mid \bm{x}; \epsilon) + l^{\prime}_{\epsilon}(\bm{x};\epsilon) \} dP(\bm{x} \mid  g(\bm{A}) = (1,0)).
\end{eqnarray*}
Similarly, 
\begin{eqnarray*}
&&E\left[ \frac{\pi_{(1,0)}(\bm{X})}{\pi_{(1,0)}} \frac{I(g(\bm{A})=(0,0))}{\pi_{(0,0)}(\bm{X})} \Delta \mu^{(0,0)}(\bm{X}) l^{\prime}_{\epsilon}(O; \epsilon) \right] 
\\ &=& \int \Delta \mu^{(0,0)}(\bm{x}) \{  l^{\prime}_{\epsilon}(g(\bm{A}) = (1,0) \mid \bm{x}; \epsilon) + l^{\prime}_{\epsilon}(\bm{x};\epsilon) \} dP(\bm{x} \mid  g(\bm{A}) = (1,0)).
\end{eqnarray*}
Since 
$E\left[ \frac{I(A_{i} = (1,0) )}{\pi_{(1,0)} } \{ \Delta \mu^{(0,1)}(\bm{X}) - \Delta \mu^{(0,0)} (\bm{X}) -\delta \} l^{\prime}_{\epsilon}(O;\epsilon) \right] = 0$, we can conclude that
\begin{eqnarray*}
&& E[\psi_{\delta}(Y_{1}, Y_{0}, \bm{A}, \bm{X})  l^{\prime}_{\epsilon}(O; \epsilon)] \\ 
&=& \int \int (y_{1} - y_{0}) l^{\prime}_{\epsilon}(y_{1}, y_{0} \mid g(\bm{A}) = (0,1), \bm{x}; \epsilon) dP(y_{1}, y_{0} \mid g(\bm{A}) = (0, 1), \bm{x}) dP(\bm{x} \mid g(\bm{A}) = (1,0); \epsilon) \\
&-& \int \int (y_{1} - y_{0}) l^{\prime}_{\epsilon}(y_{1}, y_{0} \mid g(\bm{A}) = (0,0), \bm{x}; \epsilon) dP(y_{1}, y_{0} \mid g(\bm{A}) = (0, 0), \bm{x}) dP(\bm{x} \mid g(\bm{A}) = (1,0); \epsilon) \\
&-& \int \Delta \mu^{(0,1)}(\bm{x}) \{  l^{\prime}_{\epsilon}(g(\bm{A}) = (1,0) \mid \bm{x}; \epsilon) + l^{\prime}_{\epsilon}(\bm{x};\epsilon) \} dP(\bm{x} \mid  g(\bm{A}) = (1,0))\\
&+& \int \Delta \mu^{(0,0)}(\bm{x}) \{  l^{\prime}_{\epsilon}(g(\bm{A}) = (1,0) \mid \bm{x}; \epsilon) + l^{\prime}_{\epsilon}(\bm{x};\epsilon) \} dP(\bm{x} \mid  g(\bm{A}) = (1,0)) 
\\ &=& \delta^{\prime}_{\epsilon}(\epsilon).
\end{eqnarray*}
That is, $\delta$ is pathwise differentiable, and $\psi_{\delta} (Y_{1}, Y_{0}, \bm{A}, \bm{X})$ is on the tangent space of a parametric submodel. Therefore, the semiparametric efficiency bound is given by the expected value of $\psi^{2}_{\delta}(Y_{1}, Y_{0}, \bm{A}, \bm{X})$~\citep{newey1990semiparametric, kennedy2016semiparametric}.

\end{proof}

\subsection{Proof of Theorem~\ref{thm:asymptotic}}
\begin{proof}[Proof of Theorem~\ref{thm:asymptotic}]

Assume that $\hat{\pi}_{(1,0)}(\bm{X}; \alpha^0)$ is a parametric model with the finite dimensional ``pseudo-true" parameter $\alpha^{0}$. Similarly define $\hat{\pi}_{(0,1)}(\bm{X}; \alpha^{0})$ and $\hat{\pi}_{(0,0)}(\bm{X}; \alpha^{0})$. Let $\hat{\mu}^{(0,1)}_{t}(\bm{X}, \beta^{0})$ and $\hat{\mu}^{(0,0)}_{t}(\bm{X}, \beta^{0})$ be parametric models for $\mu^{(0,1)}_{t} (\bm{X})$ and $\mu^{(0,0)}(\bm{X})$, respectively ($t$=0,1). 
Then $\hat{{\alpha}}$ and $\hat{{\beta}}$ are estimators for ${\alpha}^{0}$ and ${\beta}^{0}$, respectively. Consider the following standard conditions regarding smoothness of parametric models and asymptotically linear representations. Specifically, we first assume that parametric models $\{\hat{\pi}_{(a_{1}, a_{2})} (\bm{X})\}$ and $\{\hat{\mu}^{(a_{1}, a_{2})}_{t} (\bm{X})\}$ are smooth and twice continuously differentiable around the true values of $(\alpha^{0}, \beta^{0})$. 
For simplicity in notation, we omit $\hat{(\cdot)}$ in $\hat{\pi}_{(a_{1}, a_{2})}(\bm{X}; \alpha^{0})$ and $\hat{\mu}^{(a_{1}, a_{2})}_{t}(\bm{X}; \beta^{0})$ for $(a_{1}, a_{2}) \in \{ (1,0), (0,1), (0,0) \}$. Moreover, we assume that there exist functions $l_{\pi}(;\alpha)$ and $l_{\mu}(;\beta)$ with mean zero and finite variances such that:
\begin{eqnarray*}
\sqrt{n}(\hat{\alpha} - \alpha^{0}) &=& \frac{1}{\sqrt{n}} \sum\limits_{i=1}^{n} l_{\pi}(\bm{A}_{i}, \bm{X}_{i}; \alpha^{0}) + o_{P}(1) \\
\sqrt{n}(\hat{\beta} - \beta^{0}) &=& \frac{1}{\sqrt{n}}\sum\limits_{i=1}^{n} l_{\mu}(O_{i}; \beta^{0}) + o_{P}(1).
\end{eqnarray*}

Define the following notation:
\begin{eqnarray*}
w_{0}(\bm{A}_{i}) &=& \frac{I(A_{i} = 1)}{\pi_{(1,0)}} \\
w_{1}(\bm{A}_{i}, \bm{X}_{i}; \alpha^{0}) &=& \frac{\pi_{(1,0)} (\bm{X}_{i}; \alpha^{0}) I(g_{i}(\bm{A}) = (0,1))}{ \pi_{(1,0)} \pi_{(0,1)}(\bm{X}_{i}; \alpha^{0}) } \\
w_{2}(\bm{A}_{i}, \bm{X}_{i}; \alpha^{0}) &=& \frac{\pi_{(1,0)} (\bm{X}_{i}; \alpha^{0}) I(g_{i}(\bm{A}) = (0,0))}{ \pi_{(1,0)} \pi_{(0,0)}(\bm{X}_{i}; \alpha^{0}) }\\ 
\dot{\mu}^{(a_{1}, a_{2})}_{t}(\bm{X}_{i}; \beta^{0}) &=&  \frac{\partial \mu^{(a_{1}, a_{2})}_{t}(\bm{X}_{i}; \beta^{0})}{\partial \beta^{0}} \quad \text{ for } (a_{1}, a_{2}) \in \{(0,0), (0,1)\};~ t= 0,1\\ 
\dot{w}_{1}(\bm{A}_{i}, \bm{X}_{i}; \alpha^{0}) &=& \frac{I(g_{i}(\bm{A}) = (0,1))}{\pi_{(1,0)}} \left\{  \frac{\dot{\pi}_{(1,0)}(\bm{X}_{i}; \alpha^{0}) \pi_{(0,1)}(\bm{X}_{i}; \alpha^{0}) - \pi_{(1,0)}(\bm{X}_{i};\alpha^{0}) \dot{\pi}_{(0,1)}(\bm{X}_{i}; \alpha^{0}) }{\pi^2_{(0,1)} (\bm{X}_{i}; \alpha^{0}) }\right\} \\ 
\dot{w}_{2}(\bm{A}_{i}, \bm{X}_{i}; \alpha^{0}) &=& \frac{I(g_{i}(\bm{A}) = (0,1))}{\pi_{(1,0)}} \left\{  \frac{\dot{\pi}_{(1,0)}(\bm{X}; \alpha^{0}) \pi_{(0,0)}(\bm{X}_{i}; \alpha^{0}) - \pi_{(1,0)}(\bm{X}_{i};\alpha^{0}) \dot{\pi}_{(0,0)}(\bm{X}_{i}; \alpha^{0}) }{\pi^2_{(0,0)} (\bm{X}_{i}; \alpha^{0}) }\right\},
\end{eqnarray*}
where $\dot{\pi}_{(a_{1}, a_{2})}(\bm{X}_{i}; \alpha^{0})$ is the derivative of $\pi_{(a_{1}, a_{2})}(\bm{X}_{i}; \alpha^{0})$ with respect to $\alpha^{0}$.

Following the proofs of Theorem A.1 in~\cite{sant2020doubly} that use a second-order Taylor expansion around the true value of $\alpha^{0}$ and $\beta^{0}$, the following equations hold.
\begin{eqnarray*}
&& \sqrt{n}(\hat{\delta}^{dr} - \delta) 
\\ &=& \frac{1}{\sqrt{n}} \sum\limits_{i=1}^{n} w_{1}(\bm{A}_{i},\bm{X}_{i}; \alpha^{0})\{ \Delta Y_{i} - E(w_{1}(\bm{A}, \bm{X}; \alpha^{0}) \Delta Y_{i} )  \} \\ 
&-&   w_{2}(\bm{A}_{i},\bm{X}_{i}; \alpha^{0})\{ \Delta Y_{i} - E(w_{2}(\bm{A}, \bm{X}; \alpha^{0}) \Delta Y )  \} \\ 
&+& l_{\pi}(\bm{A}_{i}, \bm{X}_{i}; \alpha^{0})^{\prime} E( \dot{w}_{1}(\bm{A}, \bm{X}; \alpha^{0}) (\Delta Y - E(w_{1}(\bm{A}, \bm{X}; \alpha^{0}) \Delta Y )) )  \\
&-&l_{\pi}(\bm{A}_{i}, \bm{X}_{i}; \alpha^{0})^{\prime} E( \dot{w}_{2}(\bm{A}, \bm{X}; \alpha^{0}) (\Delta Y - E(w_{2}(\bm{A}, \bm{X}; \alpha^{0}) \Delta Y )) ) \\
&-& w_{1}(\bm{A}_{i}, \bm{X}_{i}; \alpha^{0}) (\Delta \mu^{(0,1)}(\bm{X}_{i}; \beta^{0}) - E(w_{1}(\bm{A}, \bm{X}; \alpha^{0}) \Delta \mu^{(0,1)}(\bm{X}; \beta^{0}) ) ) \\ &-& l_{\pi}(\bm{A}_{i}, \bm{X}_{i}; \alpha^{0})^{\prime} E( \dot{w}_{1}(\bm{A}, \bm{X}; \alpha^{0}) ( \Delta \mu^{(0,1)}(\bm{X}; \beta^{0}) - E( w_{1}(\bm{A}, \bm{X}; \alpha^{0}) \Delta \mu^{(0,1)}(\bm{X}; \beta^{0})  )  )    )\\ 
&-& l_{\mu}(O_{i}; \beta^{0})^{\prime} E(w_{1}(\bm{A}, \bm{X}; \alpha^{0}) \Delta \dot{\mu}^{(0,1)}(\bm{X}; \beta^{0}) )  \\ 
&+& w_{2} (\bm{A}_{i}, \bm{X}_{i}; \alpha^{0}) ( \Delta \mu^{(0,0)}(\bm{X}_{i}; \beta^{0}) - E(w_{2}(\bm{A}_{i}, \bm{X}_{i}; \alpha^{0}) \Delta \mu^{(0,0)}(\bm{X}_{i}; \beta^{0}) ) )
\\ &-& l_{\pi}(\bm{A}_{i}, \bm{X}_{i}; \alpha^{0}) E( \dot{w}_{2}(\bm{A}, \bm{X}; \alpha^{0}) ( \Delta \mu^{(0,0)}(\bm{X}; \beta^{0}) - E( w_{2}(\bm{A}, \bm{X}; \alpha^{0}) \Delta \mu^{(0,1)}(\bm{X}; \beta^{0})  )  )    ) \\ 
&-& l_{\mu}(O_{i}; \beta^{0}) E(w_{2}(\bm{A}, \bm{X}; \alpha^{0}) \Delta \dot{\mu}^{(0,0)}(\bm{X}; \beta^{0}) )  \\ 
&+& w_{0}(\bm{A}_{i}) ( \Delta \mu^{(0,1)}(\bm{X}_{i}; \beta^{0}) - E( w_{0}(\bm{A}) \Delta \mu^{(0,1)} (\bm{X}; \beta^{0}) )  ) \\
&+& l_{\mu}(O_{i}; \beta^{0})^{\prime} E(  w_{0}(\bm{A}) \Delta \dot{\mu}^{(0,1)} (\bm{X}; \beta^{0}) )  ) \\
&-& w_{0}(\bm{A}_{i}) ( \Delta \mu^{(0,0)}(\bm{X}_{i}; \beta^{0}) - E( w_{0}(\bm{A}) \Delta \mu^{(0,0)} (\bm{X}; \beta^{0}) )  ) \\
&-& l_{\mu}(O_{i}; \beta^{0})^{\prime} E(  w_{0}(\bm{A}) \Delta \dot{\mu}^{(0,0)} (\bm{X}; \beta^{0}) )  ) + o_{P}(1)
\\ &=& \frac{1}{\sqrt{n}} \sum\limits_{i=1}^{n} w_{1}(\bm{A}_{i}, \bm{X}_{i}; \alpha^{0}) \{  \Delta Y_{i} - \Delta \mu^{(0,1)}(\bm{X}_{i}; \beta^{0}) - E( w_{1}(\bm{A}, \bm{X}; \alpha^{0}) ( \Delta Y - \Delta \mu^{(0,1)}(\bm{X}; \beta^{0}) ) )    \} \\ 
&-&  w_{2}(\bm{A}_{i}, \bm{X}_{i}; \alpha^{0}) \{  \Delta Y_{i} - \Delta \mu^{(0,0)}(\bm{X}_{i}; \beta^{0}) - E( w_{2}(\bm{A}, \bm{X}; \alpha^{0}) ( \Delta Y - \Delta \mu^{(0,0)}(\bm{X}; \beta^{0}) ) )    \} \\
&+& w_{0}(\bm{A}_{i}) \{ \Delta \mu^{(0,1)}(\bm{X}_{i}; \alpha^{0}) - \Delta \mu^{(0,0)}(\bm{X}_{i}; \alpha^{0}) - E(  w_{0}(\bm{A}) (\Delta \mu^{(0,1)}(\bm{X}; \alpha^{0}) - \Delta \mu^{(0,0)}(\bm{X}; \alpha^{0}))  ) \} \\
&+& l_{\pi}(\bm{A}_{i}, \bm{X}_{i}; \alpha^{0})^{\prime} E[  \dot{w}_{1}(\bm{A}, \bm{X}; \alpha^{0}) \cdot \{\Delta Y - \Delta \mu^{(0,1)}(\bm{X}; \beta^{0}) - E( w_{1}(\bm{A}, \bm{X}; \alpha^{0}) ( \Delta Y - \Delta \mu^{(0,1)}(\bm{X}; \beta^{0})) ) \} ] \\
&-& l_{\pi}(\bm{A}_{i}, \bm{X}_{i}; \alpha^{0})^{\prime} E[  \dot{w}_{2}(\bm{A}, \bm{X}; \alpha^{0}) \cdot \{\Delta Y - \Delta \mu^{(0,0)}(\bm{X}; \beta^{0}) - E( w_{2}(\bm{A}, \bm{X}; \alpha^{0}) ( \Delta Y - \Delta \mu^{(0,0)}(\bm{X}; \beta^{0})) ) \} ] \\
&-& l_{\mu}(O_{i}; \beta^{0})^{\prime} E\{ ( w_{1}(\bm{A}, \bm{X}; \alpha^{0}) - w_{0}(\bm{A}) )\Delta \dot{\mu}^{(0,1)}(\bm{X}; \beta^{0}) \} \\ 
&+&  l_{\mu}(O_{i}; \beta^{0})^{\prime} E\{ ( w_{2}(\bm{A}, \bm{X}; \alpha^{0}) - w_{0}(\bm{A}) )\Delta \dot{\mu}^{(0,0)}(\bm{X}; \beta^{0}) \} + o_{P}(1)
\end{eqnarray*}

Now suppose that both the propensity score working model and the outcome working model are correctly specified. That is, $\pi_{(a_{1}, a_{2})}(\bm{X}; \alpha^{0}) = \pi_{(a_{1}, a_{2})}(\bm{X})$ and $\Delta \mu^{(a_{1}, a_{2})} (\bm{X}; \beta^{0}) = \Delta \mu^{(a_{1}, a_{2})}(\bm{X})$ for $(a_{1}, a_{2}) \in \{(1,0), (0,1), (0,0)\}$. Then it follows that
\begin{eqnarray}
\label{eq:aux1}
&& E[w_{1}(\bm{A}, \bm{X}; \alpha^{0}) (\Delta Y - \Delta \mu^{(0,1)}(\bm{X};\beta^{0})) ] \nonumber \\ 
&=& E \left[ \frac{\pi_{(1,0)}(\bm{X}) I(g(\bm{A} = (0,1))) }{\pi_{(1,0)} \pi_{(0,1)}(\bm{X})} \{ \Delta Y - \Delta \mu^{(0,1)}(\bm{X}) \}  \right] \\
&=& E \left[  \frac{\pi_{(1,0)}(\bm{X})}{\pi_{(1,0)}} \left\{  E(\Delta Y \mid g(\bm{A}) = (0,1), \bm{X}) - \Delta \mu^{(0,1)} (\bm{X})  \right\} \right] \nonumber \\ 
&=& 0. \nonumber
\end{eqnarray}
Similarly, $E[w_{2}(\bm{A}, \bm{X}; \alpha^{0}) (\Delta Y - \Delta \mu^{(0,0)}(\bm{X};\beta^{0})) ] = 0$ when both models are correctly specified. 
\begin{eqnarray*}
&& E[ (w_{1} (\bm{A}, \bm{X}) - w_{0}(\bm{A}) ) \Delta \dot{\mu}^{(0,1)}(\bm{X}) ] \\ 
&=& E\left[ \frac{1}{\pi_{(1,0)}} \left\{ \frac{\pi_{(1,0)}(\bm{X}) I(g(\bm{A}) = (0,1) )   }{\pi_{(0,1)}(\bm{X})} - I(A = 1)  \right\} \Delta \dot{\mu}^{(0,1)}(\bm{X}) \right] \\
&=& E\left[ \frac{1}{\pi_{(1,0)}} \left\{ \frac{\pi_{(1,0)}(\bm{X}) E(I(g(\bm{A}) = (0,1) )  \mid \bm{X} ) }{\pi_{(0,1)}(\bm{X})} - E(I(A = 1) \mid \bm{X} )  \right\} \Delta \dot{\mu}^{(0,1)}(\bm{X}) \right] \\
&=&  E\left[ \frac{1}{\pi_{(1,0)}} \left\{ \frac{\pi_{(1,0)}(\bm{X}) \pi_{(0,1)}(\bm{X}) }{\pi_{(0,1)}(\bm{X})} - \pi_{(1,0)}(\bm{X})  \right\} \Delta \dot{\mu}^{(0,1)}(\bm{X}) \right] \\
&=& 0.
\end{eqnarray*}
Similarly, we can show that $E[ (w_{2} (\bm{A}, \bm{X}) - w_{0}(\bm{A}) ) \Delta \dot{\mu}^{(0,0)}(\bm{X}) ] = 0$.
\begin{eqnarray*}
 && E[ \dot{w}_{1}(\bm{A}, \bm{X}; \alpha^{0}) \cdot \{\Delta Y - \Delta \mu^{(0,1)}(\bm{X}; \beta^{0}) - E( w_{1}(\bm{A}, \bm{X}; \alpha^{0}) ( \Delta Y - \Delta \mu^{(0,1)}(\bm{X}; \beta^{0})) ) \}  ] \\ 
 &=& E[ \dot{w}_{1}(\bm{A}, \bm{X}; \alpha^{0}) \cdot \{\Delta Y - \Delta \mu^{(0,1)}(\bm{X}; \beta^{0}) \}  ] \\
 &=& E\left[ \frac{I(g(\bm{A}) = (0,1))}{\pi_{(1,0)}} \{ \Delta Y - \Delta \mu^{(0,1)}(\bm{X}; \beta^{0}) \}   \dot{\pi}_{(1,0), (0,1)}(\bm{X}; \alpha^{0})  \right] \\
 &=&  \frac{\pi_{(0,1)}}{\pi_{(1,0)}} E[  (\Delta Y - \Delta \mu^{(0,1)}(\bm{X}; \beta^{0})) \dot{\pi}_{(1,0), (0,1)}(\bm{X}; \alpha^{0})  \mid g(\bm{A}) = (0,1)] \\ 
 &=& \frac{\pi_{(0,1)}}{\pi_{(1,0)}} E[ \{ E(\Delta Y \mid g(\bm{A}) = (0,1), \bm{X}) - \Delta \mu^{(0,1)}(\bm{X}; \beta^{0}) \}  \dot{\pi}_{(1,0), (0,1)}(\bm{X}; \alpha^{0})  ] \\ 
 &=& 0,
\end{eqnarray*}
where $\dot{\pi}_{(1,0), (0,1)}(\bm{X}_{i}; \alpha^{0}) =  \{\pi^2_{(0,1)} (\bm{X}_{i}; \alpha^{0}) \}^{-1} \{\dot{\pi}_{(1,0)}(\bm{X}_{i}; \alpha^{0}) \pi_{(0,1)}(\bm{X}_{i}; \alpha^{0}) - \pi_{(1,0)}(\bm{X}_{i};\alpha^{0}) \dot{\pi}_{(0,1)}(\bm{X}_{i}; \alpha^{0}) \}$.
The first equality follows from Equation~\eqref{eq:aux1}.
Analogously, $E[  \dot{w}_{2}(\bm{A}, \bm{X}; \alpha^{0}) \{\Delta Y - \Delta \mu^{(0,0)}(\bm{X}; \beta^{0}) - E( w_{2}(\bm{A}, \bm{X}; \alpha^{0}) ( \Delta Y - \Delta \mu^{(0,0)}(\bm{X}; \beta^{0})) ) \} ] = 0$. 
Therefore, the terms that involve the derivatives are zeros. Because
\begin{eqnarray*}
E\left[ \frac{I(A = 1)}{\pi_{(1,0)}} \{ \Delta \mu^{(0,1)}(\bm{X}) - \Delta \mu^{(0,0)}(\bm{X}) \} \right] = \delta,
\end{eqnarray*}
we can conclude the following:
\begin{eqnarray*}
&& \sqrt{n}(\hat{\delta}^{dr} - \delta) \\
&=& \frac{1}{\sqrt{n}} \sum\limits_{i=1}^{n} w_{1}(\bm{A}_{i}, \bm{X}_{i}) (\Delta Y_{i} - \Delta \mu^{(0,1)}(\bm{X}_{i}) ) - w_{2}(\bm{A}_{i}, \bm{X}_{i}) (\Delta Y_{i} - \Delta \mu^{(0,0)}(\bm{X}_{i}) ) \\
&& \quad + w_{0}(\bm{A}_{i}) (  \Delta \mu^{(0,1)}(\bm{X}_{i}) - \Delta \mu^{(0,0)}(\bm{X}_{i}) ) 
- w_{0}(\bm{A}_{i})  \delta + o_{P}(1) \\ 
&=& \frac{1}{\sqrt{n}} \sum\limits_{i=1}^{n} \frac{\pi_{(1,0)}(\bm{X}_{i} )}{\pi_{(1,0)}} \frac{I(g_{i}(\bm{A}) = (0,1)) }{\pi_{(0,1)}(\bm{X}_{i})} \left( \Delta Y_{i} - \Delta \mu^{(0,1)}(\bm{X}_{i}) \right) + \frac{I(A_{i}=1)}{\pi_{(1,0)}}\Delta \mu^{(0,1)}(\bm{X}_{i})  \\  && - \left\{
 \frac{\pi_{(1,0)}(\bm{X}_{i} )}{\pi_{(1,0)}} \frac{I(g_{i}(\bm{A}) = (0,0)) }{\pi_{(0,0)}(\bm{X}_{i})} \left( \Delta Y_{i} - \Delta \mu^{(0,0)}(\bm{X}_{i}) \right) + \frac{I(A_{i}=1)}{\pi_{(1,0)}}\Delta \mu^{(0,0)}(\bm{X}_{i})\right\} \\ && - \frac{I(A_{i}=1)}{\pi_{(1,0)}}\delta  + o_{P}(1) \\ 
 &=& \frac{1}{\sqrt{n}} \sum\limits_{i=1}^{n} \psi_{\delta,i} (Y_{1}, Y_{0}, \bm{A}, \bm{X}) + o_{P}(1).
\end{eqnarray*}
Then by the Lindeberg-L\'{e}vy Central Limit Theorem, this concludes the asymptotic normality of $\sqrt{n}(\hat{\delta}^{dr} - \delta)$ with asymptotic variance of $\sigma^{2}_{\delta_{sb}}$.

\end{proof}

\subsection{Proof of Corollary~\ref{cor:AOTT}}
\begin{proof}[Proof of Corollary~\ref{cor:AOTT}]

The proof of consistency relies on Theorem 1(a) in~\cite{sant2020doubly} and Theorem~\ref{thm:dr} in our main text, followed by an application of Slutsky's theorem. \cite{sant2020doubly} provided an efficient influence function of the ATT under identificaiton assumptions. 
Let us denote the efficient influence of the ATT by $\psi_{\text{ATT}, i}$, which is given by:
\begin{eqnarray*}
\psi_{\text{ATT}, i}(Y_{1}, Y_{0},  \bm{A}, \bm{X}) &=&  \left( \frac{I(A_{i} = 1)}{\pi_{(1,0)}} - \frac{ \pi_{(1,0)}(\bm{X}_{i}) I(g_{i}(\bm{A}) = (0, 1) )  }{\pi_{(1,0)} \pi_{(0,1)}(\bm{X}_{i})} \right) \left( \Delta Y_{i} - \Delta \mu^{(0,0)}(\bm{X}_{i}) \right)
\\ && - \frac{I(A_{i} = 1)}{ \pi_{(1,0)}} E(Y_{1}-Y_{0} \mid A_{i} = 1).
\end{eqnarray*}

By Equation~\eqref{eq:AOTT}, when both the propensity score working model and the outcome working model for the two control groups are correctly specified, we have the asymptotic linear representation of $\sqrt{n}(\hat{\tau}^{dr} - \tau)$ given by:
\begin{eqnarray*}
&& \sqrt{n}(\hat{\delta}^{dr} - \delta) \\
 &=& \frac{1}{\sqrt{n}} \sum\limits_{i=1}^{n} \psi_{\text{ATT}, i} (Y_{1}, Y_{0}, \bm{A}, \bm{X}) + \psi_{\delta, i} (Y_{1}, Y_{0}, \bm{A}, \bm{X}) + o_{P}(1),
\end{eqnarray*}
where the efficient influence function of $E(Y^{(1,1)}_{1} - Y^{(0,0)}_{1} \mid A = 1)$ is given by $\psi_{\tau, i}(Y_{1}, Y_{0}, \bm{A}, \bm{X}) := \psi_{\text{ATT}, i} (Y_{1}, Y_{0}, \bm{A}, \bm{X}) + \psi_{\delta, i} (Y_{1}, Y_{0}, \bm{A}, \bm{X})$.
\end{proof}

\section{Generalization to continuous neighborhood intervention}
\label{sec:continuous}

We can extend binary neighborhood treatment assignments to a continuous treatment setting.
Consider the exposure mapping $h_{it}(\bm{Z}_{t})$ as continuous, denoting the \textit{level} of treatment received among the neighborhood, e.g., $h_{it}(\bm{Z}_{t}) = 0.3$ indicating 30\% of the neighbors surrounding unit $i$ are treated. 
This definition allows flexibility in the specification of the treatment assignment based on the context of the policy evaluation. For example, one may define $h_{it}(\bm{Z}_{t})$ as the proportion of treated units within a $5$-mile radius at $t>0$.
Using this notation, we can generalize the AOTT in Equation~\eqref{eq:AOTT} in the main text to the following $\text{ATT}(\rho)$ with different levels of $\rho$ ($0 \leq \rho \leq 1$):
\begin{eqnarray*}
\label{eq:ATIT_general}
\text{ATT}(\rho) &:=& E ( Y^{(1, \rho)}_{1} - Y^{(0,0)}_{1} \mid A=1 ).
\end{eqnarray*}
Note that $\text{ATT}(0)$ is the ATT in the presence of spillover when surrounded only by control neighbors (Equation~\eqref{eq:ATT} in the main text) and $\text{ATT}(1)$ is the AOTT. 

\begin{figure}[H]
\centering
\begin{subfigure}[b]{0.28\textwidth}
\includegraphics[width=\textwidth]{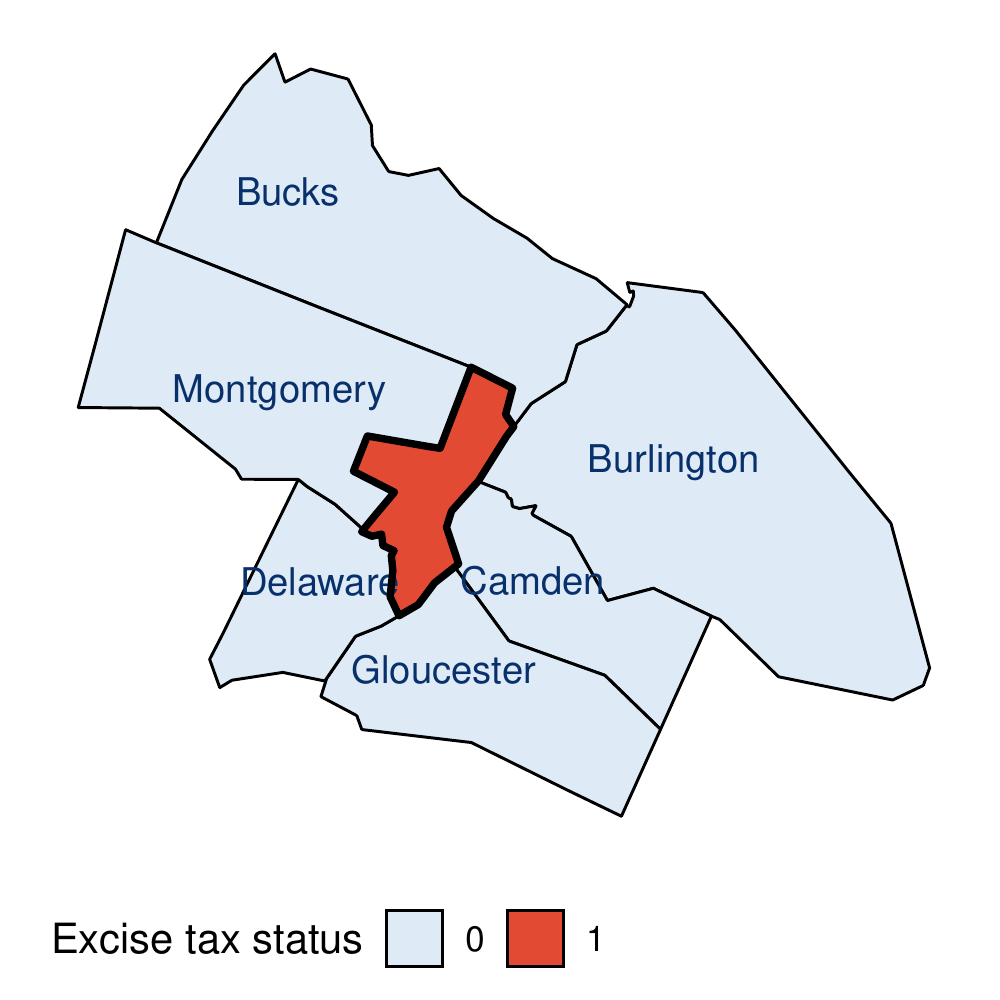}
\caption{$\rho = 0.0$}
\end{subfigure}
\begin{subfigure}[b]{0.28\textwidth}
\includegraphics[width=\textwidth]{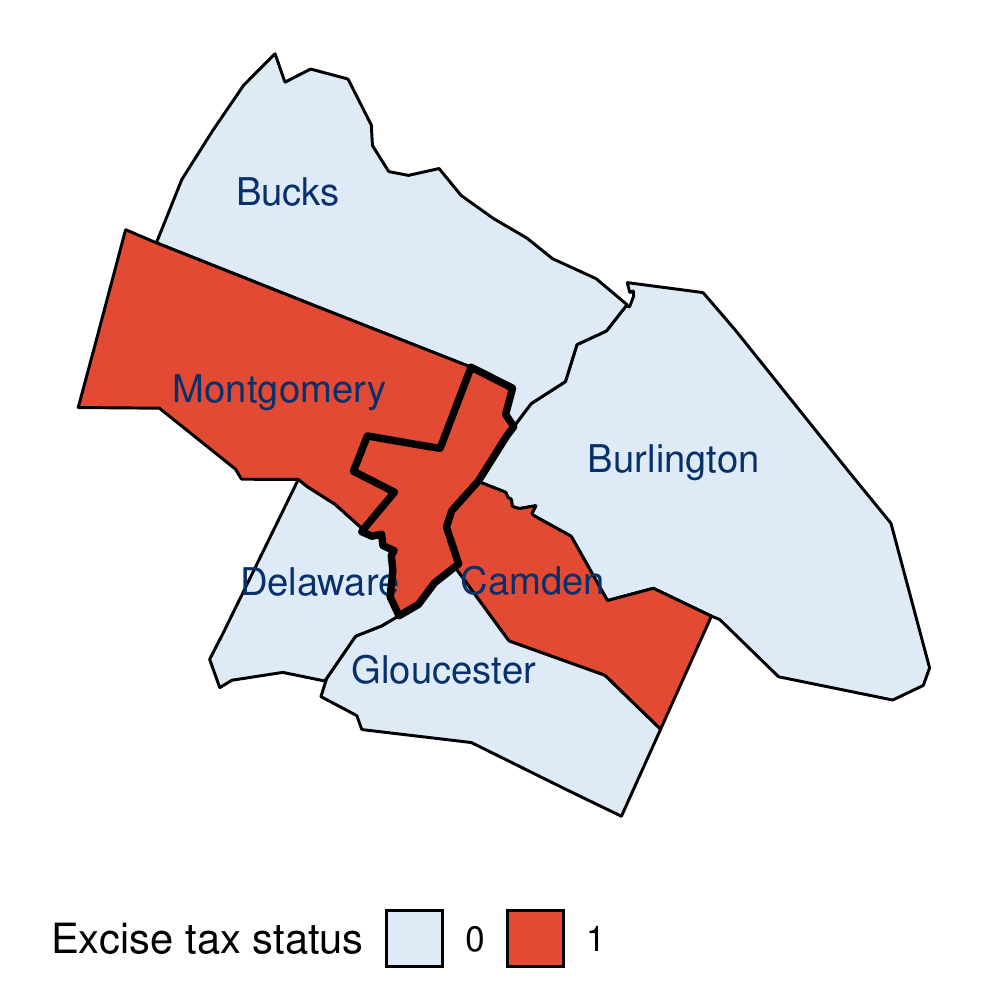}
\caption{$\rho = 0.4$}
\end{subfigure}
\begin{subfigure}[b]{0.28\textwidth}
\includegraphics[width=\textwidth]{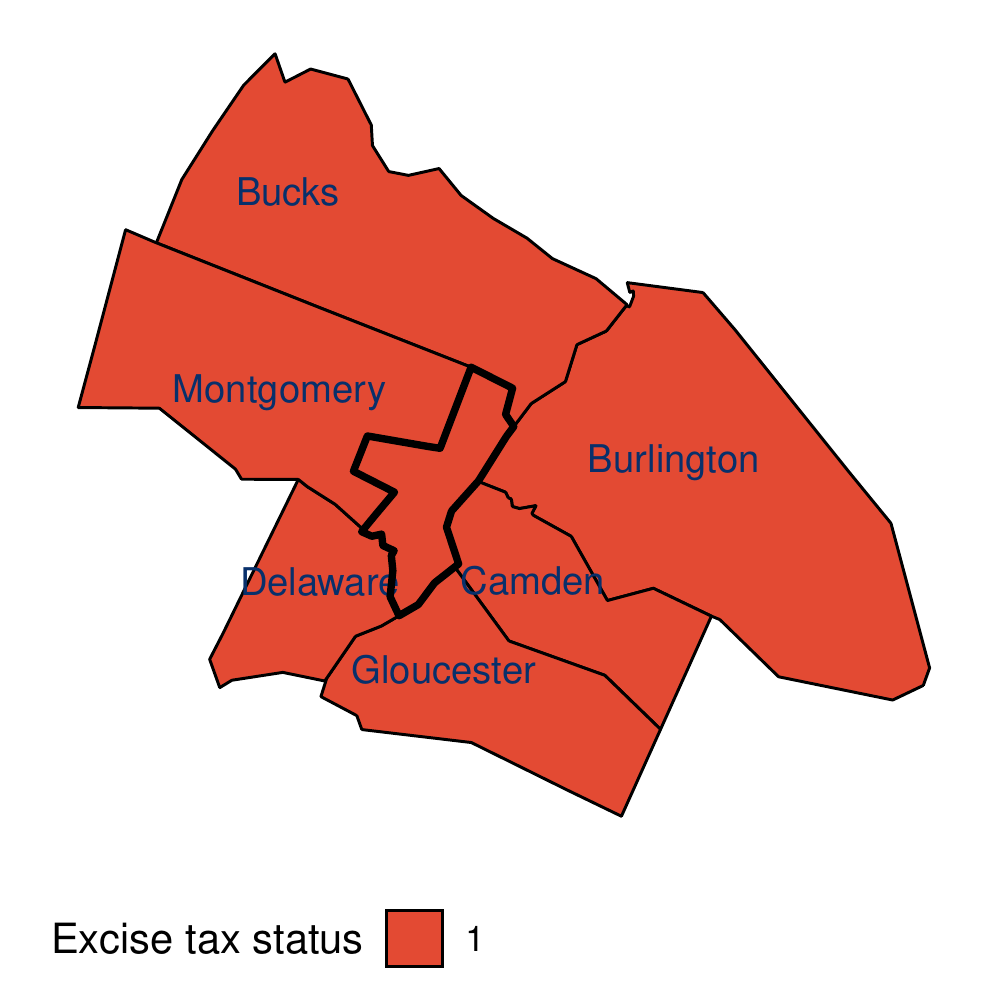}
\caption{$\rho = 1.0$}
\end{subfigure}
\caption{\label{fig:usmap} Hypothetical examples of varying proportions of treated neighborhoods surrounding Philadelphia}
\end{figure}

Figure~\ref{fig:usmap} illustrates one observed and two counterfactual neighborhood treatment assignment (e.g., excise tax) scenarios for Philadelphia and its neighboring counties.  Depending on the question of interest, policy makers may vary the neighborhood treatment assignment levels from $\rho = 0.0$ to $1.0$ and evaluate the treatment effect on Philadelphia under each level.
For example, suppose that a treated group had $(100 \times \rho)\%$ of its neighbors treated rather than 0\%. Then researchers may assume that $(100 \times \rho) \%$ of the spillover effects on neighboring controls would have been returned to the treated group if the baseline covariates $\bm{X}$ are similar between the two groups. In other words, the following counterfactual equation holds true. 
\begin{eqnarray*}
\label{eq:travel}
E(Y^{(1,0)}_{1} - Y^{(1,\rho)}_{1} \mid A = 1, \bm{X}) = 
\rho \times E(Y^{(0,1)}_{1} - Y^{(0,0)}_{1} \mid g(\bm{A}) = (0,1), \bm{X}_{i}).
\end{eqnarray*}
Then $\text{ATT}(\rho) = E(Y^{(1,0)}_{1} - Y^{(0,0)}_{1} \mid A = 1)- E(\rho \times E(Y^{(0,1)}_{1} - Y^{(0,0)}_{1} \mid g(\bm{A}) = (0,1), \bm{X})\mid A = 1)$.
As we can identify the conditional spillover effects as well as the ATT under the aforementioned assumptions, the above generalized ATT can also be identifiable.

\section{Additional Simulations}
\label{sec:additional}

In this section, we compare the finite-sample performance of our doubly robust estimator to the inverse probability weighted (IPW) and the outcome regression-based (Reg) estimators for four different estimands: ATT, ATN, offsetting effect, and AOTT. 
We consider three different scenarios: (a), (b), and (c) that are introduced in the main text.
In case of (b) the incorrect outcome regression model, we omitted the interaction terms between the time indicator and the baseline covariates $\bm{X}_{i}$, and between the neighborhood treatment assignment and $X^{2}_{i2}$. In the (c) misspecified propensity score model, we omitted $X_{i1}$ and included $\exp(X^{2}_{i2})$ instead of $X_{i2}$. Other settings remain the same as those in the main text.

\begin{table}[!ht]
  \centering
\resizebox{0.65\textwidth}{!}{\begin{tabular}{r|rrrr}
    & ATT & ATN & Offsetting effect & AOTT \\ 
    \hline
    \multicolumn{5}{l}{ (a) all correct} \\
    \hline
IPW & 0.00 (94.0) & -0.01 (95.6) & 0.01 (94.9) & 0.01 (94.7) \\ 
  Reg & 0.00 (95.2) & -0.01 (95.7) & -0.00 (94.9) & 0.00 (95.4) \\ 
  DR & 0.00 (95.1)& -0.01 (95.6)&  -0.00 (95.4)&  -0.00 (95.0)\\ 
    \hline
    \multicolumn{5}{l}{ (b) Incorrect outcome regression} \\
    \hline  
  IPW & 0.00 (93.9) & -0.01 (96.0) & 0.01 (94.9) & 0.01 (94.7) \\ 
  Reg & -0.50 (0.00) & 0.17 (59.8) & 0.31 (45.7) & -0.19 (84.9) \\ 
  DR & 0.00 (93.8) & -0.01 (95.5) &  0.02 (94.8) &  0.03 (95.2) \\ 
     \hline
     \multicolumn{5}{l}{ (c) Incorrect propensity scores} \\
     \hline
IPW & -0.49 (0.00) & 0.17 (55.6) & -0.10 (82.0) & -0.59 (2.9)\\ 
  Reg & 0.00 (95.7) & 0.00 (95.3) & 0.00 (95.3) & 0.01 (95.1) \\ 
  DR & 0.00 (94.8) & -0.00 (95.0) & 0.00 (95.1) &  0.01 (95.1) \\ 
     \hline
  \end{tabular}}
\caption{ \label{tab:all} Bias (a coverage rate of 95\% confidence intervals) based on $1000$ independent replicates when $n=2000$. The coverage rates are calculated based on the empirical standard errors of the estimates from $1000$ replicates.}
\end{table} 

Table~\ref{tab:all} presents the bias and coverage rates of each estimator for the four estimands.
When (a) both of the outcome regression model $\{\mu^{(a_{1}, a_{2})}_{t}(\bm{X})\}$ and the propensity score model $\{ \pi_{(a_{1}, a_{2})}(\bm{X}) \}$ are correctly specified, as expected, all three estimators achieve a nominal coverage rate and their biases are all near zero. When (b) the outcome regression model is misspecified, the outcome regression-based estimator performs poorly across all four estimands whereas the IPW estimator is not affected. When (c) the propensity scores are not correctly specified, the bias of the IPW estimator is high and its coverages are far lower than the nominal rate.
On the other hand, across all scenarios, the doubly robust estimator achieves a near-zero bias and nominal coverage rate.

\section{Simulation with multiple time points}
\label{sec:multiple}

In this section, we demonstrate the performance of our proposed estimator for time averaged treatment effects (Equation~\eqref{eq:time_AOTT} in the main text) when we have $T$ pre-treatment time periods (i.e., $t=-(T-1), -(T-2), \ldots, 0$) and $T$ post-treatment time periods (i.e., $t=1,2,\ldots, T$).  
We consider the following potential outcomes given $g_{i}(\bm{A}) = (a_{1}, a_{2})$ and $\bm{X}_{i}  = (X_{i1}, X_{i2})$ under no policy intervention for $t$=-12,-11, $\cdots$, 0, 1, $\cdots$, 13:
\begin{eqnarray*}
 Y^{(0,0)}_{it}  &=& \gamma_{t} + X_{1i}\lambda_{1t} + X_{2i}\lambda_{2t} + \alpha_{(a_{1},a_{2})} + \epsilon_{it}.
\end{eqnarray*}
We specify baseline group-specific effects via $(\alpha_{(0,0)}, \alpha_{(1,0)}, \alpha_{(0,1)}) = (0.0, 0.5, -1.0)$. 
Here we generate error terms from a bivariate normal at each pair of time points $(t-T,~t)$ independently across $t=1,2,\ldots, T$.
We set $\boldsymbol{\gamma} = (-1.0. -0.9, \ldots, 1.5)$ as time-specific effects. We specify the time-varying effects of the two covariates by
$\boldsymbol{\lambda}^{T}_{1,-12:0}$ = (-0.3, -0.2, \ldots, 0.3, 0.2 \ldots, -0.3), $\boldsymbol{\lambda}^{T}_{1, 0:13}$ = (-0.2, -0.3, \ldots, 0.4, 0.3 \ldots, -0.2)), $\boldsymbol{\lambda}^{T}_{2, -12:0}$ = (0.3, 0.2, \ldots, -0.3, -0.2 \ldots, 0.3), and  $\boldsymbol{\lambda}^{T}_{2, 1:13}$ = (0.0, -0.1, \ldots, -0.6, -0.5 \ldots, 0.0)). 
We then generate the following potential outcomes across $t=-12,-11, 0, 1, \ldots, 13$ for control units with $g_{i}(\bm{A}) = (0,1)$ and the treated units with $g_{i}(\bm{A}) = (1,0)$:
\begin{eqnarray*}
 Y^{(0,1)}_{it}  &=& Y^{(0,0)}_{it} +  \theta_{(0,1)} \eta_{(0,1),t} \{( 1+0.5X_{i1} ) + (1-X^2_{i2}) \} \\ 
 Y^{(1,0)}_{it} &=&  Y^{(0,0)}_{it} + \theta_{(1,0)} \eta_{(1,0),t} ( 1-0.5X_{i1}).
\end{eqnarray*}
We consider $(\theta_{(0,1)}, \theta_{(1,0)})$ = (1.0, -1.5) and (0.5, -2.0), so that the intervention has a negative effect on the treatment group's outcomes while it has a positive effect on the neighboring control's outcomes, on average. However, both of these effects may vary depending on covariates $X_{i1}$ and $X_{i2}$ and time points.
The following parameters specify the time-varying treatment effects: 
$\eta_{(0,1), t} = 0$ if $t \leq 0$ (i.e., no spillover effects at the time or prior to the implementation of the policy intervention) and $\eta_{(0,1), t} = 1 - (t-1)/12$ for $t > 0$ (i.e., decreasing spillover effects over time after the implementation of the policy intervention).  
Similarly,  we set $\eta_{(1,0), t} = 0$ if $t \leq 0$, $\eta_{(1,0), 1} = 0.8,~\eta_{(1,0), 2} = 0.9$, and $\eta_{(1,0), t} = 1 - (t-1)/10$ for $t \geq 3$, reflecting the time-lag to reach the maximized effect of the policy intervention on the treatment group.

Under these settings, we derive the doubly robust estimator for the AOTT, $\tau^{\text{dr}}_{t}$ at each time point $t=1,\ldots,13$. We then aggregate the causal estimates to have a \textit{summary} causal measure using DiD with multiple time periods. 
To estimate the variance, we can use $\text{var}(T^{-1}\sum_{t=1}^{T} n^{-1} \sum_{i=1}^{n} \hat{\psi}_{\tau,i, t})$, where $\psi_{\tau, i, t}$ is the influence function for $\hat{\tau}^{\text{dr}}_{t}$, assuming that the estimators at each time point are independent. Table~\ref{tab:multiresults} presents the simulation results for the time average AOTT using the doubly-robust estimator (Equation~\eqref{eq:time_AOTT}) under the three different nuisance model specifications scenarios when $(\theta_{(0,1)}, \theta_{(1,0)})$ is (1.0, -1.5) and (0.5, -2.0), respectively. In case of (b) the incorrect outcome regression, we have omitted the interaction terms between the time indicator and the baseline covariates $\bm{X}_{i}$ and the interaction between the neighborhood treatment assignment and $X^{2}_{i2}$. In the misspecified propensity score model, we have omitted $X_{i1}$ and included $\exp(X_{i2})$ instead of $X_{i2}$. 
We observe the the standard errors of the estimates are smaller when (a) all of the propensity scores and the outcome regression are correctly specified compared to those when either one of these models is misspecified. However, the bias in all cases converges to zero as the sample size increases. The coverage rates of 95\% confidence intervals are close to the nominal rate. 

\begin{table}[!ht]
\centering
\resizebox{\textwidth}{!}{\begin{tabular}{l|ccc|ccc|ccc}
Case & \multicolumn{3}{c|}{(a) All correct} & \multicolumn{3}{c|}{(b) Incorrect outcome regression} &  \multicolumn{3}{c}{(c) Incorrect propensity scores} \\
 & Bias & SE & CR & Bias & SE & CR & Bias & SE & CR \\
 \hline
\multicolumn{10}{l}{(1) $\theta_{(0,1)} = 1.0,~\theta_{(1,0)} = -1.5$} \\
\hline 
 $n=500$ & 0.019 & 0.447 & 0.934 & 0.032 & 0.500 & 0.966 & 0.058 & 0.630 & 0.931\\
 $n=1000$ &  0.003 & 0.325 & 0.935 & 0.006 & 0.355 & 0.957 & 0.001 & 0.444 & 0.953 \\ 
$n=2000$  & -0.006 &  0.233 & 0.948 & 0.002 & 0.254 & 0.959 & -0.001 & 0.333 &  0.960 \\
\hline
\multicolumn{10}{l}{(1) $\theta_{(0,1)} = 0.5,~\theta_{(1,0)} = -2.0$} \\
\hline
$n=500$ & 0.032 & 0.444 & 0.936 & 0.028 & 0.469 & 0.956 & 0.054 & 0.628 & 0.935\\
$n=1000$ & 0.003 & 0.323 & 0.936 &  0.006 & 0.334 & 0.951 & 0.002 & 0.442 & 0.952\\
$n=2000$ & 0.003 & 0.232 & 0.944 & 0.002 & 0.238 & 0.954 & -0.004 &  0.331 & 0.960\\
\hline
\end{tabular}}
\caption{\label{tab:multiresults} Simulation results of the time-average AOTT (Equation~\eqref{eq:time_AOTT}) with $T = 13$ when (a) all of the propensity scores and the outcome regression are correctly specified, (b) the outcome regression is misspecified, and (c) the propensity scores are misspecified. Standard errors (SE) and coverage rates (CR) of 95\% confidence intervals are based on the influence function-based variance estimator. }
\end{table}

Table~\ref{tab:all2} shows the bias and the coverage rate of 95\% confidence intervals for four different time average estimands: ATT, ATN, offsetting effect, and the AOTT. Here we compare the performance of the IPW and the outcome regression-based (Reg) estimators to the doubly-robust (DR) estimator. Similar to the case with $T=1$, all three estimators have biases that are close to zero and have coverage rates of 95\% confidence intervals that are close to 95\% for all of the estimands when (a) both of the outcome regression model and the propensity score model are correctly specified. 
However, the coverage rates of the outcome regression-based estimator and the IPW estimator significantly decline to below 95\% under scenario (b) and (c), respectively, especially for the offsetting effect and the AOTT.

\begin{table}[!ht]
  \centering
\resizebox{0.65\textwidth}{!}{\begin{tabular}{l|rrrr}
    & ATT & ATN & Offsetting effect & AOTT \\ 
    \hline
    \multicolumn{5}{l}{ (a) all correct} \\
    \hline
  IPW & -0.00 (94.9) & -0.00 (96.0) & -0.01 (95.0) & -0.01 (94.7) \\ 
  Reg & -0.00 (95.3) & -0.00 (96.2) & -0.00 (94.4) & -0.00 (95.0) \\ 
  DR & -0.00 (94.6) & -0.00 (96.1)& -0.00 (95.8)& -0.01 (94.8) \\ 
    \hline
    \multicolumn{5}{l}{ (b) Incorrect outcome regression} \\
    \hline  
  IPW & 0.00 (94.9) & -0.01 (95.8) & -0.00 (94.9) & -0.00 (94.8) \\ 
  Reg & 0.00 (95.0) & -0.01 (95.4) & -0.29 (3.4) & -0.29 (39.9) \\ 
  DR & 0.00 (94.9) & -0.01 (95.7) & -0.00 (94.5) & 0.00 (94.9) \\ 
     \hline
     \multicolumn{5}{l}{ (c) Incorrect propensity scores} \\
     \hline
IPW & 0.00 (94.9) & -0.01 (94.8) & -0.87 (55.5) & -0.87 (58.5) \\ 
  Reg & 0.00 (95.4) & -0.01 (95.5) & 0.00 (94.5) & 0.00 (95.0) \\ 
  DR & 0.00 (95.3) & -0.01 (95.8) & -0.01 (94.5) & -0.00 (94.5) \\ 
 \hline
  \end{tabular}}
\caption{ \label{tab:all2} Bias (a coverage rate of 95\% confidence intervals) based on $1000$ independent replicates when $n=2000$. The coverage rates are calculated based on the empirical standard errors of the estimates from $1000$ replicates when (1) $\theta_{(0,1)} = 1.0$ and $\theta_{(1,0)} = -1.5$.}
\end{table} 

\newpage
\section{Further Results: Philadelphia Beverage Tax Study}
\label{sec:D}

Here, we provide further results of our analysis of the Philadelphia beverage tax data. Table~\ref{tab:propensity_result} presents the estimated standardized coefficients and their p-values from the multinomial propensity score model based on 13, 15, and 26 supermarkets in Baltimore, neighboring counties of Philadelphia, and Philadelphia, respectively.  

\begin{table}[H]
\centering
\resizebox{\textwidth}{!}{\begin{tabular}{lrrrrr}
  \hline
 & Intercept & Family SS & Family AS & Individual SS & Individual AS \\
  \hline
PH & 6.548 ($<$0.001) & -6.594 ($<$0.001) & 7.278 ($<$0.001) & 7.140 ($<$0.001) & -0.383  (0.702) \\ 
NC   & -0.112 (0.911) & -3.412 (0.001) & 2.754 (0.006) & 9.315 ($<$0.001) & -4.105 ($<$0.001) \\ 
   \hline
  & Black  & Hispanic & Asian & Male & Income per household \\ 
  \hline
PH &  -4.497 ($<$0.001) & 4.183 ($<$0.001) & 4.666 ($<$0.001) & -6.629  ($<$0.001)& -1.145 (0.252) \\ 
NC & -4.322 ($<$0.001) & -3.511 ($<$0.001) & 5.970 ($<$0.001) & 0.292 (0.770) & -4.616 ($<$0.001)  \\   
  \hline
\end{tabular}}
\caption{\label{tab:propensity_result} Standardized coefficients ($p$-value) from the multinomial propensity score model with Baltimore as a reference group (PH: Philadelphia, NC: neighboring counties of Philadelphia, SS: sugar-sweetened beverages, AS: artificially sweetened beverages)}
\end{table}

Table~\ref{tab:outcome_result} presents the estimated standardized coefficients and their p-values from the outcome regression model for two different outcomes: the total unit sales of (1) the taxed individual size beverages and (2) the taxed family size beverages in supermarkets. 

\begin{table}[H]
\centering
\resizebox{\textwidth}{!}{\begin{tabular}{l|cc|cc}
\multicolumn{1}{r|}{\textbf{Outcome variables:}} & \multicolumn{2}{c|}{(1) Taxed individual sized beverages}  & \multicolumn{2}{c}{(2) Taxed family sized beverages}   \\
  \hline
\textbf{Exposure variables} & $t$-value & $p$-value & $t$-value & $p$-value \\ 
\hline
Intercept &   1.501 & 0.133 & 1.355 & 0.175 \\ 
PH & -1.195 & 0.232 & 2.532 & 0.011 \\  
NC & -1.695 & 0.090  & 0.655 & 0.513 \\ 
Year 2017 & -3.211 & 0.001  & 0.057 & 0.955 \\ 
PH $\times$ Year 2017 & -7.043 & $<0.001$  & -16.528 & $<0.001$  \\ 
NC $\times$ Year 2017 & 0.824 & 0.410 & -0.365 & 0.715 \\ 
Individual SS & 0.681 & 0.496 &  -1.067 & 0.286 \\
Family SS & -0.896 & 0.370 & 0.342 & 0.732 \\ 
Individual AS & -2.437 & 0.015& -3.391 & 0.001 \\ 
Family AS & 2.042 & 0.041 &  1.382 & 0.167 \\ 
Black &  -0.505 & 0.614 & -0.219 & 0.826 \\ 
Hispanic & 1.063 & 0.288 & 0.885 & 0.376 \\ 
Asian & -0.324 & 0.746 & -0.512 & 0.609 \\ 
Male & -0.967 & 0.334  & -1.096 & 0.273 \\ 
Income per household & 0.363 & 0.716   & 0.614 & 0.539 \\ 
PH $\times$ Year 2017 $\times$ Individual SS & 1.855 & 0.064 & 7.589 & $<0.001$  \\ 
NC $\times$ Year 2017 $\times$ Individual SS & -1.351 & 0.177  & 0.629 & 0.529 \\
PH $\times$ Year 2017 $\times$ Family SS &  7.426 & $<0.001$   & 12.421 & $<0.001$  \\ 
NC $\times$ Year 2017 $\times$ Family SS & 2.452 & 0.014  &  0.986 & 0.324 \\
PH $\times$ Year 2017 $\times$ Individual AS & -2.615 & 0.009  &  -6.003 & $<0.001$  \\ 
NC $\times$ Year 2017 $\times$ Individual AS & -2.111 & 0.035 & -0.690 & 0.490 \\  
PH $\times$ Year 2017 $\times$ Family AS & -8.742 & $<0.001$  & -13.874 & $<0.001$  \\ 
NC $\times$ Year 2017 $\times$ Family AS & -2.932 & 0.003  & -1.419 & 0.156 \\ 
 \hline
\end{tabular}}
\caption{\label{tab:outcome_result} Standardized coefficients ($p$-value) from the outcome regression model with Baltimore as a reference group (PH: Philadelphia, NC: neighboring counties, SS: sugar-sweetened beverages, AS: artificially sweetened beverages)}
\end{table}

\end{document}